 \title{Attenuate Locally, Win Globally: An Attenuation-based Framework for Online Stochastic Matching with Timeouts \thanks{This is the full version of the paper that appeared in AAMAS-2017 \cite{AAMAS17}. There was an error in one of the offline black-box. This version fixes all the ratios in the theorems. Research supported in part by NSF Awards CNS-1010789 and CCF-1422569, by a gift from Google, Inc., and by research awards from Adobe, Inc.\ and Amazon Inc.}}
\author{Brian Brubach\thanks{\textbf{Email: } \texttt{bbrubach@cs.umd.edu}}}
\author{Karthik A. Sankararaman\thanks{\textbf{Email: }\texttt{kabinav@cs.umd.edu}}}
\author{Aravind Srinivasan\thanks{\textbf{Email: } \texttt{srin@cs.umd.edu}}}
\author{Pan Xu\thanks{\textbf{Email: } \texttt{panxu@cs.umd.edu}}}
\affil{Department of Computer Science, University of Maryland, College Park, MD 20742, USA}
\date{First version: September 2017\\ This version: \today}
\begin{document}

\maketitle

\begin{abstract}


Online matching problems have garnered significant attention in recent years due to numerous applications in e-commerce, online advertisements, ride-sharing, etc. Many of them capture the uncertainty in the real world by including stochasticity in both the arrival and matching processes. The Online Stochastic Matching with Timeouts problem introduced by Bansal, \emph{et al.,} (\emph{Algorithmica}, 2012) models matching markets (e.g., E-Bay, Amazon). Buyers arrive from an independent and identically distributed (i.i.d.) known distribution on buyer profiles and can be shown a list of items one at a time. Each buyer has some probability of purchasing each item and a limit (timeout) on the number of items they can be shown. 

Bansal \emph{et al.,} (\emph{Algorithmica}, 2012) gave a $0.12$-competitive algorithm which was improved by Adamczyk, et al., (\emph{ESA}, 2015) to $0.24$. We present several online attenuation frameworks that use an algorithm for offline stochastic matching as a black box. On the upper bound side, we show that one framework, combined with a black-box adapted from Bansal \emph{et al.,} (\emph{Algorithmica}, 2012), yields an online algorithm which nearly doubles the ratio to $0.46$. Additionally, our attenuation frameworks extend to the more general setting of fractional arrival rates for online vertices. On the lower bound side, we show that no algorithm can achieve a ratio better than $0.632$ using the standard LP for this problem. This framework has a high potential for further improvements since new algorithms for offline stochastic matching can directly improve the ratio for the online problem. 

Our online frameworks also have the potential for a variety of extensions. For example, we introduce a natural generalization: Online Stochastic Matching with Two-sided Timeouts in which both online and offline vertices have timeouts. Our frameworks provide the first algorithm for this problem achieving a ratio of 0.30. We once again use the algorithm of Bansal \emph{et al.,} (\emph{Algorithmica}, 2012) as a black-box and plug it into one of our frameworks.

\end{abstract}

\setcounter{page}{1}
\pagenumbering{arabic}

\section{Introduction}

Consider a typical problem in two-sided matching markets (\eg E-Bay, Amazon). We have a certain number of buyer profiles and items. Let us denote the set of buyer profiles by $V$ and the set of items by $U$. In our initial problem, we assume that an item is present in the market until bought. We have an $n$-round process. In each round, a buyer, sampled uniformly at random (with replacement) from the buyer profiles, arrives (this is equivalent to assuming a Poisson arrival process on the buyers when $n$ is large). We assume that we have $n$ buyer profiles (this assumption is called \textit{integral arrival rates} in the literature \cite{mehtaBook}). For every item $u \in U$ and buyer $v \in V$, let $e$ denote $(u,v)$ and $p_e$ denote the probability that the buyer $v$ will buy the item $u$. If $v$ buys $u$, then we obtain a reward of $w_e$. We assume all of these buying events are independent. Since every buyer $v$ has a limited attention-span, they can be shown at most $t_v$ items where $t_v$ is typically called a \textit{timeout} or \textit{patience} for $v$. 
When a buyer arrives, we show them items one-by-one until either they choose to buy one item or they have seen $t_v$ items, whichever comes first. The goal is to design an algorithm such that the expected reward at the end of the $n$-round process is maximized. Usually, the buyer profiles are gathered based on historical information. Hence, we will assume that the system knows all of the buyer profiles as well as the buying probabilities for each item-buyer pair. 

We model this as the \textit{Online Stochastic Matching with Timeouts} (\osmt) problem introduced by Bansal, Gupta, Li, Mestre, Nagarajan, and Rudra \cite{BansalLPCure}. Formally, this is a \emph{probe-commit} model for online bipartite stochastic matching. We are given a bipartite graph $G = (U, V, E)$ as input. Let us represent an item-buyer pair by an edge $e = (u, v)$ with a probability $p_e$ of existing (independent of other edges) and a weight $w_e$. The probability $p_e$ models a buyer's interest in an item and the weight $w_e$ is the profit obtained by successfully matching $v$ to $u$. Each vertex $v \in V$ has a timeout $t_v$, which is the number of times we may probe one of its edges; however, the vertices in $U$ have no timeout restrictions (equivalently, we can say they have timeouts of infinity). The algorithm can ``probe'' $e$ to see if the buyer $v$ ``buys'' the item $u$. If they do, the decision of selling $u$ to $v$ is made irrevocably (commit). Otherwise, if they do not buy $u$, we may probe another edge incident to $v$ provided their patience $t_v$ has not been exhausted. 
The values $p_e$, $w_e$, and $t_v$ for the graph $G$ are known \emph{a priori}.

The algorithm proceeds for $n$ rounds. In each round, a vertex $v$ arrives and we can probe at most $t_v$ neighbors in an attempt to match $v$. Arrivals are drawn with replacement from a known i.i.d. distribution on $V$. For simplicity, we will consider the uniform distribution.\footnote{Our results are applicable to any distribution with integral arrival rates (the expected number of arrivals for each online vertex $v$ is an integer $r_v$ with $\sum_v r_v = n$).} 
If a probed edge $(u, v)$ is found to exist, we must match $v$ to $u$ and no more probing is allowed for that round. Each vertex in $U$ can be matched at most once. The vertices in $V$ are called \textit{types} (a buyer profile) and two or more arrivals of the same type $v \in V$ are considered distinct vertices (two different buyers of a particular profile) which can each be probed up to $t_v$ times and matched to separate neighbors in $U$. The objective is to maximize the expected weight (or profit) of the final matching obtained. Unsurprisingly, this model captures problems beyond the buyer/seller scenario described above. Various online stochastic matching problems have been considered for online advertising and many other applications \cite{mehtaBook}.

We give new algorithms for the Online Stochastic Matching with Timeouts problem that improve the competitive ratio over the previous work. We then introduce a \textit{new model} wherein the seller selling item $u$ has a limited patience as well. In this generalization, called \textit{Online Stochastic Matching with Two-sided Timeouts}, we have an additional constraint that every vertex $u \in U$ has a timeout $t_u$ and the algorithm can probe at most $t_u$ neighbors of $u$ across the $n$ rounds. 
We give the first constant-factor approximation algorithm for this generalized setting.

\xhdr{Related Work.}
		Online bipartite matching and its variants have been an active area of study beginning with the seminal work of Karp, Vazirani, and Vazirani \cite{kvv}. They studied unweighted bipartite matching in the adversarial arrival model and gave an optimal $(1 - 1/e)$ competitive algorithm. The advent of e-commerce and ad-allocation brought more variants of this problem. Many variants, which we will not summarize here, study arrivals in a random or adversarial order on an unknown set of online vertices. For an exhaustive literature survey, we refer the reader to the book of Aranyak Mehta \cite{mehtaBook}. 
		
		In the i.i.d. arrival model, Feldman \emph{et al.} \cite{bib:Feldman}, Bahmani and Kapralov \cite{bahmani2010improved}, Manshadi \emph{et al.} \cite{bib:Manshadi}, Haeupler \emph{et al.} \cite{bib:Haeupler}, Jaillet and Lu \cite{bib:Jaillet}, and Brubach \emph{et al.} \cite{ESA16} gave improved algorithms for the Online Stochastic Matching problem. The term ``stochastic'' here refers to the known i.i.d. arrival model, although some of those papers also address stochastic edge models. The Online Stochastic Matching problem can be seen as a special case of Online Stochastic Matching with Timeouts wherein all arriving vertices $v$ (buyers) have patience $t_v = 1$.
	
	
	Beyond online matching, other related problems have been studied. The Adwords problem was introduced by Mehta \emph{et al.} \cite{Mehta2007Adwords} and subsequently studied by Buchbinder \emph{et al.} \cite{Buchbinder2007} and Devanur and Hayes \cite{devanur2009adwords}. More variants have been considered by Devanur \emph{et al.} \cite{DevanurJainSivanWilkens}, Devanur \emph{et al.} \cite{DevanurSivanAzar}, and Devanur and Jain \cite{devanurjainconcave}. Other generalizations that capture the online matching problem are Online Packing Linear Programs by Feldman \emph{et al.} \cite{FeldmanHenzinger} and Agrawal \emph{et al.} \cite{AgrawalWangYe} and the study of Online Convex Programs by Agrawal and Devanur \cite{AgrawalDevanur}.


	Bansal \emph{et al.} \cite{BansalLPCure} introduced the problem of Online Stochastic Matching with Timeouts and gave the first constant factor competitive ratio of $0.12$. This was later improved to $0.24$ by Adamczyk \emph{et al.} \cite{AGMesa}. In both works, they considered the notion of timeouts only on the online vertices. The original motivation for timeouts came from the patience constraints in the Offline Stochastic Matching problem. This offline problem was first introduced by Chen \emph{et al.} \cite{Chen} and later studied by Bansal \emph{et al.} \cite{BansalLPCure}, Adamczyk \emph{et al.} \cite{AGMesa}, and Baveja \emph{et al.} \cite{baveja2015}. A generalization to packing problems was studied by Gupta and Nagarajan \cite{Gupta}.

	
	The Online Stochastic Matching with Two-sided Timeouts problem which we introduce here has no direct previous work. One related problem is Online $b$-matching wherein the offline vertices can each be matched at most $b$ times. This is somewhat similar to having timeouts on the offline vertices in online stochastic matching. In the adversarial setting, Online $b$-matching was first studied by Kalyanasundaram and Pruhs \cite{Kalyanasundaram} and they gave an optimal algorithm. Alaei \emph{et al.} \cite{Alaei} studied the prophet-inequality problem and considered the stochastic i.i.d. setting. They gave an algorithm whose competitive ratio is $1-\frac{1}{\sqrt{b+3}}$. Alaei \emph{et al.} \cite{Alaei2013} studied the Online b-matching Stochastic Matching problem in the i.i.d. setting and gave a ratio of $1- O(1/\sqrt{b})$.


\subsection{Preliminaries}
\label{sec:prelim}

Henceforth, we refer to the items $U$ as \textit{offline vertices} and the buyers $V$ as \textit{online vertices}. We refer to the reward $w_e$ as the \textit{weight} of the edge while we refer to the buying probability $p_e$ as the \textit{edge probability}. 
We use the term \textit{round} to refer to the arrival of a single online vertex and our attempts to match it. Thus, our algorithms will proceed over $n$ rounds. 
When we refer to \textit{time} $t \in [n]\doteq\{1,2,\ldots, n\}$, we refer to the beginning of the $t^{th}$ round in the process.  \bluee{In this paper, we consider a more general setting than that in \cite{BansalLPCure} and \cite{AGMesa}  where each online vertex $v$ has an arbitrary fractional arrival rate $r_v$ with $\sum_v r_v=n$. Additionally, we assume that $r_v \in (0,1]$ for every $v \in V$. This assumption is without loss of generality (WLOG), since we can create multiple copies of a vertex $v$ if $r_v > 1$ to satisfy the above condition.} We say that a vertex in the offline set $U$ is \textit{safe} at some time $t$ if it has not been matched in a previous round. Similarly, an edge $e=(u,v)$ is {\it safe} in round $i$ iff $u$ is available in an arrival of $v$ at round $i$. A safe edge is \textit{safely probed} if it is probed before $v$ is matched or the timeout $t_v$ is reached. Finally, $\partial(u)$ and $\partial(e)$ denote the edges incident to vertex $u$ and edge $e$, respectively. \bluee{We summarize all of the notation in Section~\ref{appx:notations} of the Appendix.}

The competitive ratio for this class of problems is defined slightly differently from most other online algorithms (see \bluee{Section 2.2.1 in \cite{mehtaBook}}). For a given instance $\mathcal{I}$, let $\mathbb{E}[\ALG(\mathcal{I})]$ denote the expected reward obtained by the algorithm on this instance and let $\mathbb{E}[\OPT(\mathcal{I})]$ denote the expected value of the offline optimal solution. 
The competitive ratio can be defined as $\min_{\mathcal{I}} \mathbb{E}[\ALG(\mathcal{I})]/\mathbb{E}[\OPT(\mathcal{I})]$. A commonly used technique in the literature is to construct an appropriate linear program, called the benchmark LP, whose optimal value upper bounds $\mathbb{E}[\OPT(\mathcal{I})]$. Hence, comparing the expected reward obtained by the algorithm to the optimal value of this LP immediately leads to a lower bound on the competitive ratio. For completeness, we define the offline problem as follows. We are given the full graph realized after $n$ rounds of sampling vertices from the online set independently with replacement and we may probe edges in any order. Thus, the expectation of the offline OPT is taken over the random arrivals (and hence the graph), the random outcomes of each probe, and any randomness used by some optimal algorithm.


\xhdr{Benchmark Linear Program (LP).}
We use the following Linear Program as a benchmark for our competitive ratios. 

\vspace*{-0.1in}
\begin{alignat}{2}
\label{lp2:stoch-match}
\text{maximize}    & \sum_{e \in E} w_{e} f_e p_e     \\
\text{subject to}  
	& \sum_{e \in \partial(u)} f_e p_{e} \leq 1	&\ & \quad\forall u \in U \label{lp-cons:umatch}\\
	& \sum_{e \in \partial(v)} f_e p_e \leq r_v	&\ & \quad\forall v \in V \label{lp-cons:vmatch}\\
			& \sum_{e \in \partial(u)} f_e  \leq t_u	&\ & \quad\forall u \in U \label{lp-cons:upatience}\\
		&\sum_{e \in \partial(v)} f_e  \leq t_v\cdot r_v	&\ & \quad\forall v \in V \label{lp-cons:vpatience}\\
		& 0 \le f_e \le r_v & \ &  \quad\forall e \in E
\end{alignat} 

\bluee{This linear program was introduced in \cite{BansalLPCure}. When the rates $r_v=1$ for every $v \in V$ and $t_u=n$ for every $u \in U$, Lemma 11 in \cite{BansalLPCure} shows that the optimal value of this LP upper-bounds the optimal value of the best online algorithm. A similar proof extends to the case when the rates $r_v$ are in $[0, 1]$ and the time-outs $t_u$ are any integers in $[n]$.}

The variable $f_e$ in the above LP refers to the expected number of probes on $e$ in the offline optimal. Constraints~\eqref{lp-cons:umatch} and \eqref{lp-cons:vmatch} represent the matching constraint in the offline graph and Constraints~\eqref{lp-cons:upatience} and \eqref{lp-cons:vpatience} represent patience constraints (timeouts) on the seller and buyer side, respectively. \bluee{When we have no timeouts on the offline set, we set $t_u = n$, for all $u \in U$. Informally, Constraint \eqref{lp-cons:vmatch} captures the fact that the expected number of arrivals of $v$ over the $n$ online rounds is equal to $r_v$. Thus, the expected number of matches for $v$ is no more than $r_v$. }

\xhdr{Overview of Attenuation Frameworks.}
\label{sec:overview}
	We present several online \textit{attenuation frameworks} for the design and analysis of algorithms for the Online Stochastic Matching with Timeouts problem. Informally, an attenuation framework is a method to balance the performance of every edge across all of the rounds thus improving the worst case performance (over all edges). We analyze the expected performance for every edge individually across the $n$ rounds and use the linearity of expectation to compute the total expected reward. It is easy to see that a lower bound on the competitive ratio can be obtained by considering the edge $e=(u,v)$ with the lowest ratio. However, it is common that the neighboring edges $e' \in \partial(e)$ of $e$ have much higher ratios. Moreover, the performance of $e$ is negatively affected by that of $\partial(e)$. Thus, we can improve the performance of $e$ by attenuating its neighboring edges $e'\in \partial(e)$ and in turn, improve the overall competitive ratio.

When a vertex $v$ arrives in each online round, we need to probe at most $t_v$ of its neighbors sequentially until either $v$ is matched, $v$ has no more safe neighbors, or we have made $t_v$ unsuccessful probes. This is essentially the offline stochastic matching problem studied in~\cite{BansalLPCure} on a star graph, $G(v)$, which consists of $v$ and its safe neighbors. Hence we can view our setting as follows. Suppose we have a \emph{black box}, which is an LP-based algorithm solving an offline stochastic matching problem with timeouts on a general star graph. Then any online framework takes as input a black box with a designated property and outputs an online algorithm. The final competitive ratio is determined jointly by the online framework and the input black box (see Theorems \ref{thm:attn1}, \ref{thm:attn2}, \ref{thm:attn3}, \ref{thm:ext}). 

	

	The idea of \textit{edge-attenuation}, first proposed in~\cite{AGMesa}, aims to balance the performance of all edges. Suppose our black box guarantees that each edge $e$ will be probed with probability \textit{at least} $\alpha f_e$, where $\alpha$ is some constant and $f_e$ is the value assigned by an LP relaxation. Edge-attenuation will guarantee each edge is probed with probability \textit{equal to} $\alpha f_e$. If we know that the black box probes $e$ with probability $\alpha' f_e > \alpha f_e$, we can achieve our goal by setting a probability $1 - \alpha/\alpha'$ with which we ``pretend'' to probe $e$, but don't actually probe it.
	Unfortunately, the exact value of $\alpha'$ is hard to find since it is jointly determined by multiple sources of  randomness in the algorithm and the input. Therefore, we resort to a Monte-Carlo based simulation for a sharp estimate, e.g., simulating the algorithm on the input instance many times and taking the sample mean as an estimate of $\alpha'$. Simulation-based attenuation has been previously used in the stochastic knapsack problem \cite{ma2014improvements} and the offline stochastic matching problem \cite{AGMesa}. As shown in those works, we can ensure that the simulation errors add up to give at most an additive factor of $\epsilon$ in the final ratio (for any desired positive constant $\epsilon > 0$). 

	Our novel \textit{vertex-attenuation} approach applies a similar idea to all of the edges incident to an offline vertex. Notice that the star graph centered on an arriving vertex $v$ will have a smaller number of incident edges in later rounds as the offline neighbors of $v$ have been matched in earlier rounds. Intuitively, each edge in this smaller graph has less competition and therefore a greater chance to be probed. Since the probability of being matched in a previous round can differ significantly among the offline vertices, we apply attenuation to the offline vertices to bound these probabilities in our analysis.

\xhdr{Our Contributions.}
One of our main contributions is a general template for solving the Online Stochastic Matching with Timeouts problem which gives rise to a class of algorithms. Notably, we decouple the offline subproblem of which edges to probe when a vertex arrives from the online problem of handling a series of arrivals.

The offline subproblem addressed in Section~\ref{sec:obb} takes as input a stochastic star graph and an LP solution on that graph. The output is a probing strategy which preserves the LP values up to some factor in expectation. Our template allows the offline subproblem to be solved by any black box algorithm provided it satisfies one of the three basic properties described in Section~\ref{sec:properties}. Thus any new algorithm for this subproblem can easily be plugged into our overall algorithm and the analysis. For the purposes of this paper, we plug-in the prior work of \cite{BansalLPCure} as the ``black-box'' and obtain improved ratios.

For the online framework (Section~\ref{sec:frame}), we bring tighter, cleaner analysis to the edge-attenuation approach in~\cite{AGMesa} and generalize it to work with a broad class of algorithms for the offline subproblem. For example,~\cite{AGMesa} uses only the first arrival of each online vertex type, discarding subsequent arrivals of the same type while we use every arrival. We then present a new vertex-attenuation approach which can be combined with edge-attenuation to achieve further improvement in the competitive ratio. The previous algorithm of \cite{AGMesa} gave a ratio of $0.24$, while we almost double this to give a ratio of $0.46$. 


We introduce the more general \textit{Online Stochastic Matching with Two-sided Timeouts} problem in Section~\ref{sec:extension}. This new problem is well-motivated from applications and theoretically interesting. Using our template and one of the attenuation frameworks, we give a constant factor $0.30$-competitive algorithm for this problem.

Finally, we show in Section~\ref{sec:lb} that no algorithm using the LP in this paper and in prior work can achieve a ratio better than $1 - 1/e \approx 0.632$.

 \section{Offline Black Box}
\label{sec:obb}

The online process consists of $n$ offline rounds. In each round, we have an offline stochastic matching instance as studied in \cite{BansalLPCure,AGMesa} on a star graph. Consider a single round at time $t$. Let $v$ be the arriving vertex and $G(v)$ be the star graph of $v$ and its safe neighbors. For ease of notation, we overload $G(v)$ to denote both the set of edges and the set of neighbors of $v$ \footnote{Since we deal with star graphs, the two sets have the same cardinality.}. 

\bluee{Consider the following polytope.
\begin{equation} \label{eqn:LP-off}
\left\{
\sum_{e \in G(v)} g_{e} p_{e} \leq 1, ~~ 
\sum_{e \in G(v)} g_{e}  \leq t_v , ~~ 
0 \le g_e \le 1 , \forall e \in G(v)
\right\}
\end{equation}}

An offline {\it black box} is any algorithm that transfers a \emph{feasible} solution to the polytope \eqref{eqn:LP-off} to a feasible probing strategy on $G(v)$ with a guaranteed performance for each edge. By a \textit{feasible} probing strategy, we mean one that does not violate the matching and patience constraints on $v$. Now we present one concrete example of a black box which was known in prior work (\cite{BansalLPCure}).  Let $\tilde{\mathbf{g}}=\{\tilde{g}_e| e\in G(v)\}$ be any given feasible solution to the polytope \eqref{eqn:LP-off}. We will use $\tilde{g}_e$ to refer to the value of $\tilde{\mathbf{g}}$ for edge $e$.

\xhdr{Uniform Random Black Box.}\label{sec:bb1}
The Uniform Random Black Box, denoted by $\BB_{UR}$, is a direct application of the algorithm in \cite{BansalLPCure} to the star graph $G(v)$. To be consistent with the notation in~\cite{BansalLPCure}, we use GKPS to denote the dependent rounding techniques developed in Gandhi {\it et al}~\cite{bib:Gandhi}. \bluee{Critically, we use properties (P1)-(P3) in \cite{bib:Gandhi}, which we restate here. The rounding procedure takes a fractional solution $\tilde{\mathbf{g}}$ as input and produces an integer vector $\mathbf{\hat{G}}$ that satisfy the following properties.

\begin{enumerate}
	\item \xhdr{Marginal Distribution.} For every edge $e=(u, v)$, we have $\Pr[\hat{G}((u, v)) = 1] = \tilde{g}_e$.
	\item \xhdr{Degree-preservation.} For any vertex $u \in U$, the number of edges incident to $u$ and present in $\mathbf{\hat{G}}$ is exactly equal to $t_u$.
	\item \xhdr{Negative correlation.} For any vertex $u \in U$ and any subset $S$ of edges incident to $u$, we have,
		\[ \forall b \in \{0, 1\} \quad \Pr[\wedge_{e=(u, v') \in S} (\hat{G}((u, v')) = b)] \leq \prod_{e=(u, v') \in S} \Pr[(\hat{G}((u, v')) = b] \]
\end{enumerate}
}

\begin{algorithm}[!h]
		\label{alg:BBUR}
		\caption{$\mathsf{BB}_{UR}$}
		\DontPrintSemicolon
Apply GKPS \cite{bib:Gandhi} to $G(v)$  where edge $e$ is associated with a value $g_e$. Let $\hat{G}(v)$ be the set of edges that gets rounded;

Choose a random permutation $\pi$ over $\hat{G}(v)$. Probe each edge $ e \in \hat{G}(v)$ in the order $\pi$ until $v$ is matched. 
	\end{algorithm}
 
 The performance of $\BB_{UR}$ is presented in Lemma \ref{lem:blackbox}. \bluee{For each given $e \in G(v)$, let $\lam(e, \tilde{\mathbf{g}})=\sum_{e'\neq e} \tilde{g}_{e'} p_{e'}$}.

 \begin{lemma}
 	\label{lem:blackbox}
 In $\BB_{UR}$, each edge $e$ is probed with probability at most $\tilde{g}_e$ and at least $\left(1-\frac{\lam(e,\tilde{\mathbf{g}})}{2}\right) \tilde{g}_e$, where $\lam(e, \tilde{\mathbf{g}})=\sum_{e'\neq e} \tilde{g}_{e'} p_{e'}$.
 \end{lemma}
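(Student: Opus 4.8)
The plan is to fix an edge $e=(u,v)\in G(v)$ and decompose the event ``$e$ is probed'' into two independent stages: first, whether $e$ survives the GKPS rounding (i.e.\ $e\in\hat G(v)$), and second, conditioned on $\hat G(v)$, whether $e$ is reached in the random-permutation probing phase before $v$ is matched. The upper bound is then immediate: $e$ can only be probed if it was rounded in, so by the marginal property (P1), $\Pr[e\text{ probed}]\le\Pr[e\in\hat G(v)]=\tilde g_e$.

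For the lower bound, I would first condition on the realized rounded set $\hat G(v)$ with $e\in\hat G(v)$. The key observation is that, given this set, $e$ fails to be probed \emph{precisely} when some edge $e'$ appearing before $e$ in the uniform random permutation $\pi$ turns out to exist (which matches $v$ earlier and halts probing). For each other rounded edge $e'$, the event ``$e'$ precedes $e$ in $\pi$'' has probability exactly $\tfrac12$ by symmetry of a uniform permutation, and it is independent of the existence outcome of $e'$, which occurs with probability $p_{e'}$. A union bound over all $e'\in\hat G(v)\setminus\{e\}$ then gives
\[
\Pr[\,e\text{ not probed}\mid\hat G(v)\,]\ \le\ \frac{1}{2}\sum_{e'\in\hat G(v),\,e'\ne e}p_{e'}.
\]
The factor $\tfrac12$ here --- which is exactly what yields $\lam(e,\tilde{\mathbf{g}})/2$ in the statement rather than a weaker $\lam(e,\tilde{\mathbf{g}})$ --- comes entirely from the precedence probability, so getting this factor right is the heart of the argument.

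Next I would take expectation over the rounding, conditioned on $e\in\hat G(v)$. This is where I invoke the negative-correlation property (P3), applied at the common center $v$ of the star (every edge of $G(v)$ shares the endpoint $v$): for any $e'\ne e$, negative correlation together with (P1) gives $\Pr[e'\in\hat G(v)\mid e\in\hat G(v)]\le\Pr[e'\in\hat G(v)]=\tilde g_{e'}$. Therefore
\[
\sum_{e'\ne e}p_{e'}\,\Pr[e'\in\hat G(v)\mid e\in\hat G(v)]\ \le\ \sum_{e'\ne e}\tilde g_{e'}\,p_{e'}\ =\ \lam(e,\tilde{\mathbf{g}}),
\]
and combining with $\Pr[e\text{ probed}]=\tilde g_e\cdot\Pr[e\text{ probed}\mid e\in\hat G(v)]$ yields $\Pr[e\text{ probed}]\ge\tilde g_e\bigl(1-\lam(e,\tilde{\mathbf{g}})/2\bigr)$, as required.

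The main obstacle I anticipate is bookkeeping the two independent randomness sources (the dependent rounding versus the permutation-plus-existence outcomes) cleanly through the conditioning, and ensuring that negative correlation is being used at the center vertex $v$ rather than at the degree-one leaves. The restated (P3) is phrased for vertices $u\in U$, but GKPS supplies negative correlation at \emph{every} vertex, so I would take care to justify its use at $v$, the patience-constrained center of the star to which GKPS is applied.
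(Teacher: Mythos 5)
Your proof is correct and takes essentially the same route as the paper's: both decompose the probing event into GKPS rounding followed by the permutation stage, bound the failure probability by a union bound over neighboring rounded edges that precede $e$ and exist, and invoke GKPS negative correlation at the center $v$ to get $\Pr[e'\in\hat G(v)\mid e\in\hat G(v)]\le\tilde g_{e'}$. The only cosmetic difference is that the paper realizes the permutation by uniform labels $a_{e'}\in[0,1]$ and integrates $1-x\,\lambda(e,\tilde{\mathbf{g}})$ over the position $x$ of $e$ to obtain the factor $1/2$, whereas you use the precedence probability $1/2$ directly; your care about applying negative correlation at the patience-constrained center $v$ rather than the leaves matches what the paper does implicitly.
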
 

\begin{proof}
	Each edge $e$ can be probed only when $e$ appears in $\hat{G}_v$ which occurs with probability $\tilde{g}_e$, according to GKPS \cite{bib:Gandhi}. Therefore, we have that each edge $e$ is probed  with probability at most $\tilde{g}_e$.
	
Consider an edge $e$. The random permutation order $\pi$ can be viewed as follows: each $e'$ uniformly draws a real number $a_{e'}$ from $[0,1]$ and we sort all $\{a_{e'}\}$ in increasing order. Condition on $a_e=x \in [0,1]$ and for each $e' \neq e$, let $X_{e'}$ be the indicator random variable that is $1$ if $a_{e'}<x$ (i.e., $e'$ falls before $e$ in $\pi$). For each $e'$, let $Y_{e'}$ and $Z_{e'}$ be the respective indicator random variables corresponding to the event $e'$ being rounded and if $e'$ is present when probed. \bluee{Therefore we have,
	\begin{align}
		\Pr[\mbox{$e$ is probed} \given{a_e=x, Y_e=1}] 		& \ge \E\left[1-\sum_{e'\neq e}X_{e'}Y_{e'} Z_{e'} \given{a_e=x, Y_e=1}\right] \label{ineq:bb-1} \\
	&=1- \sum_{e'\neq e} x p_{e'} \E[Y_{e'}=1 | Y_e=1] 
 \\		
		& \ge 1-x \sum_{e' \neq e} \tilde{g}_{e'}p_{e'}=1-x \lam(e,\tilde{\mathbf{g}})	\label{ineq:bb-2}
		\end{align}
		
Inequality \eqref{ineq:bb-1} follows from the observation that when conditioned on the event that $e$ is rounded up (\ie $Y_e=1$), it is probed iff none of its neighboring edges falls before $e$ in the random permutation $\pi$ (\ie $X_{e'}=1$), is probed (\ie $Y_{e'}=1$) and is present (\ie $Z_{e'}=1$). Inequality \eqref{ineq:bb-2} is valid since $\E[Y_{e'}=1 | Y_e=1] \le \E[Y_{e'}=1 ]=\tilde{g}_{e'}$ (events that edges on the star graph are rounded are negatively correlated due to GKPS \cite{bib:Gandhi}).}

	Thus we have the following.
	\begin{align*}
\Pr[\mbox{$e$ is probed}]&=\tilde{g}_e \int_0^1 \Pr[\mbox{$e$ is probed}\given{a_e=x, Y_e=1}] dx \\
	& \ge \tilde{g}_e(1- \lam(e,\tilde{\mathbf{g}})/2) \qedhere
	\end{align*}
\end{proof}

\section{Attenuation Frameworks for Online Matching with Timeouts}
\label{sec:frame}

The main idea in all our attenuation frameworks is to decouple the offline and online subproblems. The offline problem concerns with an action on an arriving vertex (\emph{i.e.,} when a vertex arrives, we must decide which incident edges to probe). This is handled by a black box offline algorithm. The black box is only restricted by one of the three properties listed below. The online problem deals with managing a series of arrivals and that is the primary focus of this section.

Throughout this section, we assume that through Monte-Carlo simulations (referred to as simulations in short, henceforth) we can get an accurate estimate of our target probabilities. As shown in \cite{AGMesa,ma2014improvements}, we can manipulate the simulation error appropriately such that we lose at most an additive factor of $\epsilon = o(1)$ in the final ratio. 

\xhdr{Properties of black box.}
\label{sec:properties}
Property A states that the black box $\BB$ is guaranteed to probe each edge with probability at least $\alpha \tilde{g}_e$ for some constant $\alpha\in (0,1)$. It gives a lower bound on the performance of each edge without any further restrictions on the black box. Formally it can be stated as follows.

\textbf{Property A:} \emph{For any feasible solution $\tilde{\mathbf{g}}$ to LP \eqref{eqn:LP-off}, $\BB$ outputs a feasible probing strategy $\BB[\tilde{\mathbf{g}}]$ such that every edge $e$ will be probed with probability at least $\alpha \tilde{g}_e$ for some constant $\alpha\in (0,1)$.} 

 Recall that $\lam(e,\tilde{\mathbf{g}})=\sum_{e' \neq  e} \tilde{g}_{e'}p_{e'}$ and this value expresses the amount of competition $e$ will face from its neighbors. Properties B and C add the restriction that the probability of probing a given edge will be a function of both $\tilde{g}_e$ and $\lam(e,\tilde{\mathbf{g}})$. This allows us to take advantage of the fact that $\lam(e,\tilde{\mathbf{g}})$ may decrease as the number of arrivals increases.
The conditions of \textit{non-increasing and convexity} on the function $\mathsf{R}_{\BB}$ are required to ensure that the offline ratio of $\BB$ can be used to bound the overall competitive ratio. The condition of \textit{finitely bounded first derivative} guarantees that the error accumulated from simulation is bounded.

\textbf{Property B:} \emph{For any feasible $\tilde{\mathbf{g}}$ to LP \eqref{eqn:LP-off}, $\BB$ outputs a feasible probing strategy $\BB[\tilde{\mathbf{g}}]$ such that each edge $e$ is probed with probability at least $\tilde{g}_e \mathsf{R}_{\BB}[\lam(e,\tilde{\mathbf{g}})]$, where $\mathsf{R}_{\BB}$ is a non-increasing and convex} function and has finitely bounded first derivative on $[0,1]$. 

Property C adds a further restriction that each edge is probed with probability at most $\tilde{g}_e$, which states as follows: 

\textbf{Property C:} \emph{For any feasible $\tilde{\mathbf{g}}$ to LP \eqref{eqn:LP-off}, $\BB$ outputs a feasible probing strategy $\BB[\tilde{\mathbf{g}}]$ such that every edge $e$ is probed with probability at most $\tilde{g}_e$ and at least $\tilde{g}_e \mathsf{R}_{\BB}[\lam(e,\tilde{\mathbf{g}})]$, where $\mathsf{R}_{\BB}$ is a non-increasing and convex} function and has finitely bounded first derivative on $[0,1]$. 

\begin{observation}\label{obs:pro}
For any $\BB$ satisfying \textbf{Property B} or \textbf{Property C} with  $\mathsf{R}_{\BB}$, we have $\RR_{\BB}[x] \le \RR_{\BB}[0] \le 1$ for all $x \in [0,1]$.
\end{observation}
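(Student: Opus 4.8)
The plan is to treat the two assertions separately, since the first is immediate from the stated hypotheses on $\RR_{\BB}$ while the second requires exhibiting a witness instance. For the inequality $\RR_{\BB}[x] \le \RR_{\BB}[0]$, I would simply invoke the fact that $\RR_{\BB}$ is assumed non-increasing on $[0,1]$ (this monotonicity is part of both \textbf{Property B} and \textbf{Property C}). Since every $x \in [0,1]$ satisfies $x \ge 0$, monotonicity gives $\RR_{\BB}[x] \le \RR_{\BB}[0]$ with no further work; convexity and the finite-derivative condition play no role here.

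The substance is in showing $\RR_{\BB}[0] \le 1$, and the idea is to force the black box onto a trivial instance where the promised lower bound collapses to exactly $\RR_{\BB}[0]$. Concretely, I would take a star graph $G(v)$ consisting of the single edge $e = (u,v)$, with any $p_e \in (0,1]$ and any timeout $t_v \ge 1$, and set $\tilde{g}_e = 1$. This $\tilde{\mathbf{g}}$ is feasible for the polytope \eqref{eqn:LP-off}: the matching constraint reads $\tilde{g}_e p_e = p_e \le 1$, the patience constraint reads $\tilde{g}_e = 1 \le t_v$, and $0 \le \tilde{g}_e \le 1$ holds. Because $e$ is the only edge, we have $\lam(e,\tilde{\mathbf{g}}) = \sum_{e' \neq e}\tilde{g}_{e'}p_{e'} = 0$, so the guarantee of \textbf{Property B} (or \textbf{Property C}) says the black box probes $e$ with probability at least $\tilde{g}_e\,\RR_{\BB}[0] = \RR_{\BB}[0]$.

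To finish, I would observe that the probability that $e$ is probed is an honest probability and hence at most $1$; combined with the lower bound just derived, this yields $\RR_{\BB}[0] \le 1$. Under \textbf{Property C} one can reach the same conclusion even more directly from its built-in upper bound (each edge is probed with probability at most $\tilde{g}_e$, so $\tilde{g}_e \RR_{\BB}[\lam(e,\tilde{\mathbf{g}})] \le \tilde{g}_e$ forces $\RR_{\BB}[\lam(e,\tilde{\mathbf{g}})] \le 1$ whenever $\tilde{g}_e > 0$), but the single-edge argument has the advantage of covering \textbf{Property B} as well.

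The one point that needs care is precisely that \textbf{Property B} supplies no a priori upper bound on the probing probability, so the naive ``$q_e \le \tilde{g}_e$'' route is unavailable for it; the trick is to replace that missing bound with the tautological observation that any probability is at most one, which becomes sharp once $\tilde{g}_e$ and $\lam(e,\tilde{\mathbf{g}})$ have been driven to $1$ and $0$ respectively on the single-edge instance. Everything else is a routine feasibility check, so I expect no genuine obstacle beyond selecting this witness correctly.
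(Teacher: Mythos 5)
Your proposal is correct and matches the paper's own argument essentially verbatim: monotonicity of $\RR_{\BB}$ gives $\RR_{\BB}[x] \le \RR_{\BB}[0]$, and the paper likewise establishes $\RR_{\BB}[0] \le 1$ via the single-edge star graph with $\tilde{g}_e = 1$, where $\lam(e,\tilde{\mathbf{g}}) = 0$ forces the probing probability to be at least $\RR_{\BB}[0]$ and hence $\RR_{\BB}[0] \le 1$. Your additional remarks (the explicit feasibility check and the observation that \textbf{Property B} lacks a built-in upper bound, making the trivial-probability bound necessary) are sound elaborations of the same route.
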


The fact that $\RR_{\BB}[0] \le 1$ can be seen from this example: consider the graph $G(v)$ which has exactly one edge $e=(u,v)$. Clearly, $\tilde{g}_e=1$ is a feasible solution to LP \eqref{eqn:LP-off}. Then, $\BB[\tilde{\mathbf{g}}]$ will probe $e$ with probability at least $\RR_{\BB}[0] \tilde{g}_e=\RR_{\BB}[0] $, implying $\RR_{\BB}[0] \le 1$.

Note that the black box $\BB_{UR}$ satisfies all three properties.

\subsection{Attenuation}
\label{sec:attenuation}
Our black box properties give us lower bounds on the probability that an edge or vertex will be matched at any given time during the online phase. Attenuation allows us to make those bounds tight by reducing the performance of any edge or vertex which is exceeding the lower bound. The intuition is that weakening the over-performing edges will increase the performance of the lowest performing edges that provide the worst case competitive ratio.

We define three distinct attenuation frameworks: \textit{edge attenuation} which requires an offline black box satisfying \textbf{Property A}, \textit{vertex-attenuation} which requires a black box satisfying \textbf{Property C}, and \textit{edge and vertex-attenuation} which requires an offline black box satisfying \textbf{Property B}. The edge-attenuation framework generalizes and clarifies the edge-attenuation approach of~\cite{AGMesa}. We also give an improved result due to tighter analysis. Vertex-attenuation is a novel approach introduced in this paper that upper bounds the probability that a vertex in $U$ will be safe at time $t$. This lets us exploit the fact that the star graph $G(v)$ will be smaller in later rounds leading to a higher probability of safely probing each of the remaining edges. It can be combined with edge-attenuation to get the best known result for this problem.

Let $\ff=\{f_e | e\in E\}$ be an optimal solution to LP \eqref{lp2:stoch-match}. Let $v$ be the vertex arriving at  time $t \in [n]$ and $G_t(v)$ be the star graph consisting of $v$ and its safe neighbors at $t$. \bluee{Throughout this section,``at time $t$'' refers to the more precise statement \emph{``at the beginning of time (or round) $t$ before any online actions are performed''} unless explicitly stated otherwise. Let $\g_{t,v}=\{f_{e}/r_v | e \in G_t(v)\}$, with $g_e=f_e/r_v$ for each $e$ incident to $v$. From Constraints in LP \eqref{lp2:stoch-match}, it follows that $\g_{t,v}$ is a feasible solution to LP \eqref{eqn:LP-off} for the graph $G_t(v)$.

\xhdr{Edge-attenuation.} 
The most basic form of attenuation we consider is edge attenuation. Suppose we are given a black box $\BB$ satisfying \textbf{Property A} and guaranteeing that each edge is probed with probability at least $\alpha g_e$. This attenuation will guarantee that in each offline subproblem, each edge is probed with probability equal to $\alpha g_e$.

From \textbf{Property A}, we know that $\BB[\g_{t,v}]$ will probe each edge $e$ with probability at least $\alpha g_e$. {\it In this framework, we maintain that each safe edge $e$ is probed with probability exactly equal to $\alpha g_e=\alp f_e/r_v$ in all rounds via appropriate edge-attenuation}. Algorithm \ref{alg: the first attenu} gives a formal description of the algorithm.

\begin{algorithm}[!h]
		\label{alg: the first attenu}
		\caption{$\mathsf{ATTN}_{1}[\BB]$ }
		\DontPrintSemicolon
For each $t \in [n]$, let $v$ be the vertex arriving at time $t$ and $G_t(v)$ be the graph consisting of $v$ and its safe neighbors.\;
\bluee{Let $\g_{t,v}=\{f_e/r_v| e \in G_t(v)\}$ be the induced feasible solution to LP  \eqref{eqn:LP-off} from an optimal solution $\ff$ to LP \eqref{lp2:stoch-match}}.

Apply $\BB[\g_{t,v}]$ and simulation-based edge-attenuation (see Sec.~\ref{sec:overview}) to $G_t(v)$, such that each $e $ is probed with probability exactly equal to $\alpha g_e$.
	\end{algorithm}
	
\begin{theorem} \label{thm:attn1}	
For any $\BB$ satisfying \textbf{Property A}, $\mathsf{ATTN}_1[\BB]$ has a competitive ratio of $1-e^{-\alpha}$.
\end{theorem}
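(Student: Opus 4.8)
The plan is to charge the expected reward edge-by-edge and compare the total against the benchmark LP objective $\sum_{e} w_e f_e p_e$, which upper-bounds $\mathbb{E}[\OPT]$. Fix an offline vertex $u$ and write $\beta_u = \sum_{e \in \partial(u)} f_e p_e$; Constraint~\eqref{lp-cons:umatch} gives $\beta_u \le 1$. The central quantity is $P_t(u)$, the probability that $u$ is still safe at the beginning of round $t$, with $P_1(u)=1$. My first step is to derive a clean one-step recurrence for $P_t(u)$. Since arrivals are i.i.d.\ and independent of the past, the event ``$u$ is safe at time $t$'' is independent of who arrives at time $t$. Conditioned on $u$ safe and on a neighbor $v$ arriving (probability $r_v/n$), the edge $e=(u,v)$ lies in the safe star $G_t(v)$, and the defining invariant of $\mathsf{ATTN}_1[\BB]$ forces $e$ to be probed with probability \emph{exactly} $\alpha g_e = \alpha f_e/r_v$, hence matched with probability $\alpha f_e p_e / r_v$. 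Summing over the neighbors of $u$ yields a per-round ``hazard'' of exactly $\frac{\alpha}{n}\sum_{e \in \partial(u)} f_e p_e = \alpha\beta_u/n$, crucially independent of $t$. Thus $P_{t+1}(u) = (1-\alpha\beta_u/n)\,P_t(u)$, giving $P_t(u) = (1-\alpha\beta_u/n)^{t-1}$.

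Next I would compute the total expected reward from a single edge $e=(u,v)$. In round $t$ it contributes $w_e \cdot P_t(u)\cdot (r_v/n)\cdot (\alpha f_e/r_v)\cdot p_e = \frac{\alpha w_e f_e p_e}{n}\,P_t(u)$. Summing the geometric series over $t\in[n]$ gives total contribution $\frac{w_e f_e p_e}{\beta_u}\bigl(1-(1-\alpha\beta_u/n)^n\bigr)$. Using the standard inequality $(1-\alpha\beta_u/n)^n \le e^{-\alpha\beta_u}$ (valid since $\alpha\beta_u \in [0,n]$), this is at least $\frac{w_e f_e p_e}{\beta_u}\bigl(1-e^{-\alpha\beta_u}\bigr)$, so no passage to a limit is needed. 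It then remains to show $\frac{1-e^{-\alpha\beta}}{\beta}\ge 1-e^{-\alpha}$ for $\beta\in(0,1]$. I would verify that $h(\beta)=\frac{1-e^{-\alpha\beta}}{\beta}$ is nonincreasing: its derivative has the sign of $g(\beta)=e^{-\alpha\beta}(1+\alpha\beta)-1$, and since $g(0)=0$ and $g'(\beta)=-\alpha^2\beta e^{-\alpha\beta}\le 0$, we get $g(\beta)\le 0$ and hence $h'(\beta)\le 0$. Therefore the minimum of $h$ over $(0,1]$ is $h(1)=1-e^{-\alpha}$. Combining over all edges and recalling $\beta_u\le1$,
\[
\mathbb{E}[\ALG] \;=\; \sum_{e} \frac{w_e f_e p_e}{\beta_u}\bigl(1-e^{-\alpha\beta_u}\bigr) \;\ge\; (1-e^{-\alpha})\sum_{e} w_e f_e p_e \;\ge\; (1-e^{-\alpha})\,\mathbb{E}[\OPT].
\]

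The step I expect to be the main obstacle is justifying that the per-round hazard is \emph{exactly} $\alpha\beta_u/n$, uniformly across rounds and across the random configuration of the remaining safe neighbors of $v$. Property~A only supplies a lower bound $\alpha g_e$ on the probe probability for each fixed safe star, whereas the recurrence needs an exact equality regardless of configuration. Turning the inequality into equality requires attenuating by a factor tied to the black box's true probe probability on the current configuration, which is unknown and must be estimated by Monte-Carlo simulation. I would appeal to the error analysis of~\cite{AGMesa,ma2014improvements} to argue that these estimation errors accumulate to at most an additive $\epsilon=o(1)$ in the final ratio, and I would emphasize that it is precisely the $t$-independence of the hazard (a consequence of exact edge-attenuation plus the memorylessness of i.i.d.\ arrivals) that collapses $P_t(u)$ to a pure geometric sequence and makes the edge-by-edge accounting go through.
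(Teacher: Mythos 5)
Your proposal is correct and follows essentially the same route as the paper's proof: the same per-edge accounting against the LP objective, the same exact per-round hazard $\alpha F_u/n$ extracted from the attenuation invariant of $\mathsf{ATTN}_1$ (with the same appeal to simulation-based attenuation and the error analysis of \cite{AGMesa,ma2014improvements} to turn Property~A's lower bound into an equality), the same geometric safety probability $(1-\alpha F_u/n)^{t-1}$, and the same geometric-series summation. Two small remarks. First, you make explicit a step the paper glosses over: the monotonicity of $\beta \mapsto (1-e^{-\alpha\beta})/\beta$, which justifies taking the worst case $F_u=1$ under Constraint~\eqref{lp-cons:umatch}; your derivative computation is a genuine improvement in rigor there. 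Second, there is a minor bookkeeping discrepancy: by applying $(1-\alpha\beta_u/n)^n \le e^{-\alpha\beta_u}$ \emph{before} accounting for the simulation error, you land exactly on $1-e^{-\alpha}$ with no slack left, so subtracting the additive $\epsilon$ would give $1-e^{-\alpha}-\epsilon$, marginally short of the stated ratio. The paper keeps the finite-$n$ quantity $1-(1-\alpha/n)^n$ and uses the gap $e^{-\alpha}-(1-\alpha/n)^n=O(1/n)$ as exactly the budget that absorbs the simulation error. The fix is already inside your own argument: your geometric sum produces $\bigl(1-(1-\alpha\beta_u/n)^n\bigr)/\beta_u$, which is nonincreasing in $\beta_u$ by the same style of derivative computation, so running your monotonicity step on this function directly (rather than on its exponential upper envelope) preserves the $O(1/n)$ slack and lets the argument close completely.
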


\begin{proof}
Consider an edge $e=(u,v)$ and let $F_u=\sum_{e \in \partial(u)} f_{e}p_{e}$. In Algorithm \ref{alg: the first attenu}, we have that in any round $t\in [n]$, each edge $e$ is probed with probability \emph{equal} to $\alp g_e$ when conditioned on its arrival.  Consider the case when $u$ is safe at time $t$. We have that $u$ is matched by the end of round $t$ with probability exactly equal to 
	\[ \sum_{e=(u,v) \in \partial(u)} \frac{r_v}{n}\alp g_e p_{e}=\sum_{e=(u,v) \in \partial(u)} \frac{r_v}{n}\frac{\alp f_e}{r_v} p_{e} =\alpha F_u/n.\]
Therefore $u$ will be safe after any time $t' \in [n]$ with probability equal to $(1-\alpha F_u/n)^{t'-1}$. Thus we have, 
\begin{align*}
\Pr[e \mbox{ is probed}]&=\sum_{t=1}^{n} \frac{r_v}{n} \alpha g_{e} \left(1-\frac{\alpha F_u}{n}\right)^{t-1} \\
&=\sum_{t=1}^{n} \frac{r_v}{n}  \frac{\alp f_e}{r_v} \left(1-\frac{\alpha F_u}{n}\right)^{t-1} \\
& \ge f_e \Big(1- \left(1-\frac{\alpha}{n}\right)^n\Big)
>  f_{e}(1-e^{-\alpha}) 
\end{align*}
After incorporating the simulation error (see Section~\ref{app:att1} in Appendix for details), we get a ratio of $1-(1-\alpha/n)^n-\epsilon$ for any given $\epsilon$.  Thus by setting $\epsilon=e^{-\alpha}-(1-\alpha/n)^n  =O(1/n)$, we get the result in theorem \ref{thm:attn1}.
\end{proof} }

Notice that $\BB_{UR}$ satisfies \textbf{Property A} with $\alpha =1/2$.
Plugging those values into the above theorem, we get a ratio of $0.3934$.

\begin{corollary} \label{thm:OC-mainproblemfr1}

When combined with $\BB_{UR}$, Algorithm \ref{alg: the first attenu} yields a competitive ratio of $0.3934$ for the Online Stochastic Matching with Timeouts problem.
\end{corollary}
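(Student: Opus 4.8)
The goal of Corollary~\ref{thm:OC-mainproblemfr1} is simply to instantiate Theorem~\ref{thm:attn1} with the concrete black box $\BB_{UR}$, so the plan is to verify that $\BB_{UR}$ satisfies \textbf{Property A} with the stated constant $\alpha = 1/2$, and then substitute into the ratio $1 - e^{-\alpha}$ from Theorem~\ref{thm:attn1}. The first step is to recall Lemma~\ref{lem:blackbox}, which guarantees that in $\BB_{UR}$ each edge $e$ is probed with probability at least $\left(1 - \frac{\lam(e,\tilde{\mathbf{g}})}{2}\right)\tilde{g}_e$. To extract a constant $\alpha$ independent of the instance, I would bound $\lam(e,\tilde{\mathbf{g}}) = \sum_{e' \neq e} \tilde{g}_{e'} p_{e'}$ from above using the feasibility of $\tilde{\mathbf{g}}$ for the polytope~\eqref{eqn:LP-off}: the matching constraint $\sum_{e \in G(v)} \tilde{g}_e p_e \leq 1$ immediately gives $\lam(e,\tilde{\mathbf{g}}) \leq 1$.

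Plugging $\lam(e,\tilde{\mathbf{g}}) \leq 1$ into Lemma~\ref{lem:blackbox} yields a lower bound of $\left(1 - \tfrac{1}{2}\right)\tilde{g}_e = \tfrac{1}{2}\tilde{g}_e$ on the probing probability of each edge. This is exactly the statement of \textbf{Property A} with $\alpha = 1/2$, so $\BB_{UR}$ qualifies as a valid black box for the edge-attenuation framework $\mathsf{ATTN}_1$. The second (and final) step is then purely arithmetic: apply Theorem~\ref{thm:attn1} to obtain a competitive ratio of $1 - e^{-\alpha} = 1 - e^{-1/2}$, and evaluate $1 - e^{-1/2} \approx 0.3934$, which is the claimed bound.

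There is essentially no hard obstacle here, since the corollary is a direct specialization; the only point requiring a moment's care is confirming that the worst-case bound $\lam(e,\tilde{\mathbf{g}}) \leq 1$ is tight enough to match the stated constant, i.e.\ that we are not being wasteful. Since the matching constraint can indeed be nearly tight (for example, a star with one heavy edge), $\alpha = 1/2$ is the best constant one can guarantee uniformly from Lemma~\ref{lem:blackbox} without appealing to the finer $\lam$-dependence captured by Properties B and C. Thus the factor-$\tfrac{1}{2}$ loss is genuine for this framework, and the improvement to $0.46$ claimed elsewhere in the paper must come from the vertex-attenuation and combined frameworks that exploit the full function $\mathsf{R}_{\BB}[\lam]$ rather than its worst-case value; for the present corollary, substituting $\alpha = 1/2$ into Theorem~\ref{thm:attn1} suffices.
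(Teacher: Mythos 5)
Your proposal is correct and matches the paper's own (very brief) derivation: the paper simply observes that $\BB_{UR}$ satisfies \textbf{Property A} with $\alpha = 1/2$ and plugs this into Theorem~\ref{thm:attn1} to get $1 - e^{-1/2} \approx 0.3934$. Your added justification of $\alpha = 1/2$ via Lemma~\ref{lem:blackbox} and the bound $\lam(e,\tilde{\mathbf{g}}) \le \sum_{e' \in G(v)} \tilde{g}_{e'} p_{e'} \le 1$ from polytope~\eqref{eqn:LP-off} is exactly the right (and in the paper, implicit) reasoning.
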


Although this approach does not give our best result, we note that it places fewer restrictions on the black box than the other approaches presented in this paper. Thus, developing a stronger black box satisfying only \textbf{Property A} could potentially lead to the edge-attenuation framework yielding a better result for this problem in the future.


\xhdr{Vertex-attenuation.} 
Applying vertex-attenuation without any edge-attenuation requires an offline black box $\BB$ satisfying our most strict property, \textbf{Property C}. We now motivate the intuition for the vertex-attenuation framework. Over time, the offline vertices in $U$ will be matched and therefore removed from the graph. Suppose we apply $\BB[\g_{t,v}]$ to $G_t(v)$ on each round $t$ when $v$ arrives. Thus when $t$ gets larger and $G_t(v)$ gets smaller, $\lam(e,\g_{t,v})$ will decrease for each safe edge $e=(u,v)$. This means the lower bound on probing an edge, $g_e \RR_{\BB}[\lam(e,\g_{t,v})]$, will increase with time. The trivial lower-bound is given by $\lam(e,\g_{t,v}) \leq 1$. Vertex-attenuation helps take advantage of the fact that $\lam(e,\g_{t,v})$ is a decreasing function by guaranteeing that every offline node is safe with a uniformly decreasing probability at the start of each round $t$.

Consider a specific round $t$ when $v$ comes. Let $\SC_{u,t}$ be the event that $u$ is safe at $t$ for each $u \in U$. We have the following Lemma \ref{lem:attn2}. 

\bluee{\begin{lemma}\label{lem:attn2}
Suppose we apply $\BB[\g_{v,t}]$ to $G_t(v)$ during each round $t$ when $v$ arrives where $\BB$ satisfies \textbf{Property C}. Then for each round $t\in [n]$ and $u,u' \in U$, we have (1) $\Pr[\SC_{u,t}] \ge (1-1/n)^{t-1}$ and 
(2) $\Pr[\SC_{u,t}  \bigwedge \SC_{u',t}] \le \Pr[\SC_{u,t}]  \Pr[\SC_{u',t}]$.
\end{lemma}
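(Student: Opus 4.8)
The plan is to prove both bounds by isolating the history-independent effect of a \emph{single} round and then chaining those single-round estimates across the $n$ rounds. Throughout I will exploit the key structural fact that the sequence of safe sets $(S_t)_t$, where $S_t \subseteq U$ is the set of offline vertices still safe at the start of round $t$, is a Markov chain: conditioned on $S_t$, the arrival in round $t$ and the behavior of $\BB$ on the induced star graph $G_t(v)$ depend only on $S_t$ and fresh randomness, not on how $S_t$ was reached.

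For part (1) I would first bound the one-round death probability uniformly over histories. Fix any history in which $u$ is safe at the start of round $t$. When $v$ arrives (probability $r_v/n$), the edge $e=(u,v)$ is probed with probability at most $\tilde g_e = f_e/r_v$ by the upper-bound half of \textbf{Property C}, and if probed succeeds with probability $p_e$; crucially this bound is independent of the rest of $S_t$. Summing over the neighbors $v$ of $u$ gives $\Pr[u \text{ matched in round } t \mid \text{history}] \le \frac1n \sum_{e \in \partial(u)} f_e p_e = F_u/n \le 1/n$, the last inequality being Constraint~\eqref{lp-cons:umatch}. Hence $u$ survives round $t$ with probability at least $1-1/n$ regardless of the history, and multiplying these conditional survival probabilities over rounds $1,\dots,t-1$ (formally, via the tower property on the natural filtration) yields $\Pr[\SC_{u,t}] \ge (1-1/n)^{t-1}$.

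For part (2) the engine is that in any single round at most one offline vertex is matched: once a probe succeeds, $v$ is matched and probing stops, so the events ``$u$ is matched in round $t$'' for distinct $u$ are mutually exclusive. Consequently, conditioned on any safe set $S \ni u,u'$, writing $p_u(S),p_{u'}(S)$ for the respective one-round matching probabilities, the probability that both survive the round equals $1 - p_u(S) - p_{u'}(S) \le (1-p_u(S))(1-p_{u'}(S))$, i.e. the survivals of $u$ and $u'$ are negatively correlated \emph{within} a round. I would then try to propagate this to negative correlation of $\SC_{u,t}$ and $\SC_{u',t}$ by induction on $t$, conditioning on $\mathcal{F}_{t-1}$ and using the tower property so that the joint survival indicator factors through this single-round bound.

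The main obstacle is exactly this propagation. Unlike part (1), where a uniform upper bound sufficed, here the single-round survival probabilities $p_u(S),p_{u'}(S)$ genuinely depend on the current safe set $S$, so the within-round negative correlation does not telescope on its own: I must rule out the competing ``protective'' effect by which one vertex staying safe shields the other from being probed. I would control this with a monotonicity property of the process, namely that matching (hence removing) a vertex only weakly increases the one-round matching probabilities of the remaining vertices, equivalently that $p_w(S)$ is non-increasing in $S$ under inclusion; this holds for $\BB_{UR}$ because deleting a competing edge can only move an edge earlier in the random probe order. Combined with the Markov structure and a monotone coupling of the true process against one in which $u'$ (resp.\ $u$) is frozen as permanently safe, this monotonicity should let the single-round negative correlation compose across all rounds and deliver $\Pr[\SC_{u,t}\wedge \SC_{u',t}] \le \Pr[\SC_{u,t}]\,\Pr[\SC_{u',t}]$. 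I expect the delicate step to be verifying that the coupling respects the conditioning at every round, since the generic \textbf{Property C} guarantee controls only the endpoints of the probing-probability interval and not its monotone behavior, so the argument leans on the structure of the concrete black box rather than on \textbf{Property C} alone.
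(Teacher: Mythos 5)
Your part (1) is correct and is essentially the paper's proof verbatim: the upper-bound half of \textbf{Property C} gives a history-uniform per-round matching probability of at most $F_u/n\le 1/n$ (via Constraint~\eqref{lp-cons:umatch}), and chaining the conditional survival probabilities over rounds yields $\Pr[\SC_{u,t}]\ge(1-1/n)^{t-1}$. For part (2) you also reproduce the paper's central per-round inequality: since only one online vertex arrives per round and $\partial(u)$, $\partial(u')$ are disjoint, at most one of $u,u'$ can be matched in a round, so the conditional joint survival probability has the form $1-a-b\le(1-a)(1-b)$. Up to this point you and the paper agree.

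The genuine gap is in the cross-round propagation, which you identify but do not close. The paper does not use any coupling or monotonicity: it writes, for each round $\ell$, the joint conditional factor as $\Pr[\SC_{u,\ell}\wedge\SC_{u',\ell}\given{\SC_{u,\ell-1},\SC_{u',\ell-1}}]=1-\sum_{e\in\partial(u)\cup\partial(u')}f_e p_e\alpha_e/n$ for per-edge probe rates $\alpha_e\in[0,1]$, bounds it by the product of the two marginal conditional factors using disjointness, and then telescopes, exploiting that the safe events are nested ($\SC_{u,\ell}\subseteq\SC_{u,\ell-1}$), so that $\Pr[\SC_{u,t}\wedge\SC_{u',t}]=\prod_{\ell=2}^{t}\Pr[\SC_{u,\ell}\wedge\SC_{u',\ell}\given{\SC_{u,\ell-1},\SC_{u',\ell-1}}]\cdot\Pr[\SC_{u,1}\wedge\SC_{u',1}]\le\Pr[\SC_{u,t}]\Pr[\SC_{u',t}]$. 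Your proposal instead leaves the key step as an unverified monotone coupling, and, as you yourself concede, the monotonicity $p_w(S)$ non-increasing in $S$ is a structural property of $\BB_{UR}$ (deleting a competitor moves an edge earlier in the probe order), not a consequence of \textbf{Property C}; since the lemma is asserted for \emph{every} black box satisfying \textbf{Property C}, an argument leaning on $\BB_{UR}$ does not establish the stated claim. To be fair, the ``shielding'' effect you worry about points at a real subtlety that the paper's own proof finesses --- it implicitly takes the same rates $\alpha_e$ under the joint conditioning and under each marginal conditioning --- but the intended proof is the short factor-by-factor telescoping above, and to match it you should replace the coupling machinery with that computation rather than attempt to prove negative association of the full process.
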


\begin{proof}
First, we show the proof of Inequality (1). Assume that $u$ is safe at the beginning of time $t-1$, \ie $\SC_{u,t-1}$. In round $t-1$, every edge $e \in \partial(u)$ will be matched with probability at most $f_e p_e/n$ by \textbf{Property C} (since $e$ is probed with probability at most $g_e$ conditioned on its arrival). Therefore, we have that $u$ is matched during round $t-1$ with probability at most $F_u/n$, where $F_u=\sum_{e \in \partial(u)} f_e p_e$. Thus we have
	\[\Pr[\SC_{u,t}] \ge \Pr[\SC_{u,t-1}] (1-F_u/n) \ge  \Pr[\SC_{u,t-1}] (1-1/n).\]
Using induction on $t$, we get that $\Pr[\SC_{u,t}]  \ge (1-1/n)^{t-1}$ for each $t \in [n]$. 

Second, we show the proof of Inequality (2). Consider the time $t-1$ and let $u$ and $u'$ both be safe (\ie both $\SC_{u,t-1}$ and $\SC_{u',t-1}$ occur). Assume that each edge $e$ is probed with probability $\alpha_e f_e$ with $\alpha_e \in [0,1]$ conditioning on its arrival at $t-1$. We get $\Pr[\SC_{u,t}  \bigwedge \SC_{u',t} \given{\SC_{u,t-1}, \SC_{u',t-1}}] =1-\sum_{e \in \partial(u) \cup \partial(u') } f_e p_e \alpha_e /n$ and $\Pr[\SC_{u,t}\given{\SC_{u,t-1}}] =1-\sum_{e \in \partial(u)} f_e p_e \alpha_e /n $, $\Pr[\SC_{u',t}\given{\SC_{u',t-1}}] =1-\sum_{e \in \partial(u')} f_e p_e \alpha_e /n $. Since $\partial (u)$ and $\partial (u')$ are disjoint, we have
	\[\Pr[\SC_{u,t}  \wedge \SC_{u',t} \given{\SC_{u,t-1}, \SC_{u',t-1}}] \le\Pr[\SC_{u,t}\given{\SC_{u,t-1}}]  \Pr[\SC_{u',t}\given{\SC_{u',t-1}}]\]
Thus, 
\begin{align*}
\Pr[\SC_{u,t} \wedge \SC_{u',t}] &=\left(\prod_{\ell=2}^t\Pr[\SC_{u,\ell}  \wedge \SC_{u',\ell} \given{\SC_{u,\ell-1}, \SC_{u',\ell-1}}] \right)\Pr[\SC_{u,1} \wedge \SC_{u',1}]\\
& \le  \left(\prod_{\ell=2}^t \Pr[\SC_{u,\ell}\given{\SC_{u,\ell-1}}]  \Pr[\SC_{u',\ell}\given{\SC_{u',\ell-1}}] \right) \Pr[\SC_{u,1}]\Pr[\SC_{u',1}]\\
&=\Pr[\SC_{u,t}]\Pr[\SC_{u',t}]
\end{align*}
\end{proof} 

Let $\beta_{u,t}=\Pr[\SC_{u,t}]$ be the probability that offline vertex $u$ is safe at $t$ before any attenuation. (As discussed before, values $\{\beta_{u,t}\}$ can obtained through simulation with sufficiently high precision.) From Lemma~\ref{lem:attn2}, we get $\beta_{u,t}\ge (1-1/n)^{t-1}$ for each $t$ and $u$. By applying an attenuation factor of $\frac{(1-1/n)^{t-1}}{\beta_{u,t}}$, we can ensure that every $u$ is safe at time $t$ with probability \textit{equal to} $(1-1/n)^{t-1}$ thus implying that $u$ is matched with probability $1-\frac{(1-1/n)^{t-1}}{\beta_{u,t}}$. Additionally, by applying vertex-attenuation \emph{independently} to every offline $u$ we get the following: events that the offline vertices being safe at each time $t$ are pairwise negatively correlated. Algorithm \ref{alg: the second attenu} gives the formal description for framework.

\begin{algorithm}[!h]
		\label{alg: the second attenu}
		\caption{$\mathsf{ATTN}_{2}[\BB]$ }
		\DontPrintSemicolon
For each $t \in [n]$, let $v$ be the vertex arriving at time $t$ and $G_t(v)$ be the star graph consisting of $v$ and its safe neighbors.\;

\bluee{Let $\beta_{u,t}$ be the probability that $u$ is safe at (the beginning of) $t$. 
Apply an attenuation factor of $\frac{(1-1/n)^{t-1}}{\beta_{u,t}}$ to each offline $u$ \emph{independently} such that each $u$ is safe at time $t$ with probability {\it equal to} $(1-1/n)^{t-1}$. 

Let $\g_{t,v}=\{f_e/r_v| e \in G_t(v)\}$ be the induced feasible solution to LP  \eqref{eqn:LP-off} from an optimal solution $\ff$ to LP \eqref{lp2:stoch-match}. Let $\BB[\g_{t,v}]$ be the feasible probing strategy of $\BB$}.\;
Apply $\BB[\g_{t,v}]$ to $G_t(v)$.
	\end{algorithm}	
\begin{theorem}\label{thm:attn2}	
For any $\BB$ satisfying \textbf{Property C} with function $\mathsf{R}_{\BB}$, $\mathsf{ATTN}_2[\BB]$ has a competitive ratio of $\int_0^1 e^{-x}\mathsf{R}_{\BB}[e^{-x}] dx-\epsilon$.
\end{theorem}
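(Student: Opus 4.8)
The plan is to fix a single edge $e=(u,v)$, lower-bound the total probability that $e$ is probed across all $n$ rounds, and compare it against $f_e$; the competitive ratio then follows by linearity of expectation and the LP benchmark. First I would fix a round $t$ and condition on $v$ arriving (probability $r_v/n$). Since $e$ can be probed only when $u$ is safe, and since after the independent vertex-attenuation $u$ is safe at the beginning of round $t$ with probability \emph{exactly} $(1-1/n)^{t-1}$, I would invoke \textbf{Property C} on the realized star graph $G_t(v)$ to bound the conditional probing probability inside the black box by $g_e\,\mathsf{R}_{\BB}[\lam(e,\g_{t,v})]$, giving
\[ \Pr[e \text{ probed at } t \mid v \text{ arrives}] \ge g_e \, \Pr[\SC_{u,t}] \cdot \E\big[ \mathsf{R}_{\BB}[\lam(e,\g_{t,v})] \,\big|\, \SC_{u,t}\big]. \]

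The key step is to control $\E[\mathsf{R}_{\BB}[\lam(e,\g_{t,v})]\mid \SC_{u,t}]$. Because $\mathsf{R}_{\BB}$ is convex, Jensen's inequality gives $\E[\mathsf{R}_{\BB}[\lam]\mid \SC_{u,t}] \ge \mathsf{R}_{\BB}[\E[\lam\mid \SC_{u,t}]]$. I would then bound the conditional expectation of the competition term $\lam(e,\g_{t,v}) = \sum_{e'=(u',v)\neq e} g_{e'} p_{e'}\,\mathbf{1}[\SC_{u',t}]$ using the negative correlation in part~(2) of Lemma~\ref{lem:attn2}: conditioning on $u$ being safe cannot increase the probability that a neighbor $u'$ is safe, so $\Pr[\SC_{u',t}\mid \SC_{u,t}]\le \Pr[\SC_{u',t}] = (1-1/n)^{t-1}$. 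Combined with the feasibility bound $\sum_{e'\neq e} g_{e'}p_{e'}\le 1$ from the matching constraint of polytope~\eqref{eqn:LP-off}, this yields $\E[\lam\mid \SC_{u,t}] \le (1-1/n)^{t-1}$, and since $\mathsf{R}_{\BB}$ is non-increasing, $\mathsf{R}_{\BB}[\E[\lam\mid\SC_{u,t}]] \ge \mathsf{R}_{\BB}[(1-1/n)^{t-1}]$.

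Putting these together and summing over rounds, using $g_e=f_e/r_v$, gives
\[ \Pr[e \text{ probed}] \ge \sum_{t=1}^n \frac{r_v}{n} g_e (1-1/n)^{t-1}\mathsf{R}_{\BB}[(1-1/n)^{t-1}] = \frac{f_e}{n}\sum_{t=1}^n (1-1/n)^{t-1}\mathsf{R}_{\BB}[(1-1/n)^{t-1}]. \]
I would then recognize the right-hand side as a Riemann sum: with $x\approx (t-1)/n$ and $(1-1/n)^{t-1}\to e^{-x}$, the finitely bounded first derivative of $\mathsf{R}_{\BB}$ guarantees this sum converges to $f_e\int_0^1 e^{-x}\mathsf{R}_{\BB}[e^{-x}]\,dx$ up to an $O(1/n)$ error. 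Since the expected reward equals $\sum_e w_e p_e \Pr[e\text{ probed}]$ and the LP objective $\sum_e w_e f_e p_e$ upper-bounds $\E[\OPT]$, I would conclude the competitive ratio is at least $\int_0^1 e^{-x}\mathsf{R}_{\BB}[e^{-x}]\,dx$, with the Riemann-sum error and the simulation error folded into $\epsilon$.

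The main obstacle is the second paragraph: correctly arguing that conditioning on $u$ being safe does not inflate the competition faced by $e$. This is precisely where both structural hypotheses are needed — the negative correlation of Lemma~\ref{lem:attn2}(2), which itself relies on applying vertex-attenuation \emph{independently} across offline vertices, and the convexity of $\mathsf{R}_{\BB}$ in \textbf{Property C}, which licenses the Jensen step replacing the random competition by its expectation. A secondary technical point is to verify that the per-round events compose cleanly, namely that the arrival of $v$ at round $t$ is independent of the safety configuration $\SC_{u,t}$ determined by rounds $1,\dots,t-1$, so that the per-round bounds may be multiplied and summed without hidden correlations.
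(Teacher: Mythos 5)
Your proposal is correct and follows essentially the same route as the paper's proof: the same per-edge, per-round decomposition $\Pr[\A_{e,t}] \ge \frac{r_v}{n}\Pr[\SC_{u,t}]\, g_e\, \E[\mathsf{R}_{\BB}[\lam(e,\g_{t,v})] \mid \SC_{u,t}]$, the same Jensen step via convexity, the same use of Lemma~\ref{lem:attn2}(2) to bound $\Pr[\SC_{u',t}\mid\SC_{u,t}] \le (1-1/n)^{t-1}$, and the same Riemann-sum limit with simulation error folded into $\epsilon$. Your justification of $\E[\lam\mid\SC_{u,t}] \le (1-1/n)^{t-1}$ via the polytope constraint $\sum_{e'} g_{e'}p_{e'} \le 1$ (from Constraint~\eqref{lp-cons:vmatch} after scaling by $r_v$) is in fact stated more cleanly than in the paper, which garbles that bound as ``$\le t_v$''.
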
	

\begin{proof}
Consider a given edge $e=(u,v)$. Let $\SC_{u,t}$ be the event that $u$ is safe at (the beginning of) $t$, i.e., $e \in G_t(v)$. From $\ATTN_2$ and Lemma~\ref{lem:attn2}, we have that
$\Pr[\SC_{u,t}]=(1-1/n)^{t-1}$ and $\Pr[\SC_{u',t}\given{\SC_{u,t}}] \le (1-1/n)^{t-1}$. Let $v$ be the vertex arriving at $t$. Recall that $\lam(e, \g_{t,v})$ is the sum of $g_{e'} p_{e'}$ over all edges in $G_t(v)$ (consisting of all safe edges incident to $v$ at $t$) excluding $e$ itself. Let $N_v$ be the set of neighbors of $v$ in the original graph $G$. Therefore, we have 
\begin{align}
\E[\lam(e, \g_{t,v})\given{\SC_{u,t}}] =\sum_{u' \in N_v, u' \neq u} (f_{u',v}/r_v) p_{u',v}\Pr[\SC_{u',t} \given{\SC_{u,t}} ]\le (1-1/n)^{t-1} \label{ineq:bb-2-1}
\end{align}
Inequality \eqref{ineq:bb-2-1} uses the fact that $\sum_{u' \in N_v, u' \neq u}f_{u',v} p_{u',v} \le t_v$. This is guaranteed by the Constraint~\eqref{lp-cons:vmatch} in LP~\eqref{lp2:stoch-match}.

  Let $\A_{e,t}$ be the event that $e$ is {\it effectively} probed during round $t$, i.e., $v$ comes at $t$, $u$ is safe, and $e$ is probed. We have 
\begin{align*}
\Pr[\A_{e,t}]  &=\Pr[ \mbox{$v$ comes at $t$}] \Pr[\mbox{$u$ is safe at $t$}] \Pr[\mbox{$e$ is probed | $u$ is safe at $t$}]  \\
&\ge \frac{r_v}{n} \Pr[\SC_{u,t}] g_e \E\Big[\mathsf{R}_{\BB}[\lam(e,\g_{t,v})] |\SC_{u,t} \Big] ~~( \textbf{Property C})  \\
&\ge  \frac{f_e}{n}(1-1/n)^{t-1} \mathsf{R}_{\BB} [\E[\lam(e,\g_{t,v})|\SC_{u,t}]] ~~ \mbox{(Convexity of $\mathsf{R}_{\BB}$, $g_e=f_e/r_v$)}\\
&\ge \frac{f_e}{n}(1-1/n)^{t-1}  \mathsf{R}_{\BB} [ (1-1/n)^{t-1}] ~~\mbox{(Inequality~\eqref{ineq:bb-2-1} and $\mathsf{R}_{\BB}$ is non-incereasing)}
 \end{align*}

Thus, we have,
\begin{align*}
 \Pr[\mbox{$e$ is probed}] &=\sum_{t=1}^n \Pr[\A_{e,t}] \\
& \ge \sum_{t=1}^n \frac{f_e}{n}\left(1-\frac{1}{n}\right)^{t-1}  \mathsf{R}_{\BB} \left[ \left(1-\frac{1}{n}\right)^{t-1}\right] =f_e\int_0^1 e^{-x}  \mathsf{R}_{\BB} [e^{-x}] dx
\end{align*}

The right-most equality is obtained by letting $n \rightarrow \infty$.
Incorporating simulation errors (see Section~\ref{app:att2}), we get a competitive ratio of $\int_0^1 e^{-x}  \mathsf{R}_{\BB} [e^{-x}] dx -\epsilon$, for any given $\epsilon>0$. 
\end{proof}
}	
	
Plugging the $\BB_{UR}$ function $\mathsf{R}_{\BB_{UR}}[x]=1-x/2$ into the above formula, we get a ratio of $0.4159$.	

\begin{corollary} \label{thm:OC-mainproblemfr2}
The second framework combined with $\BB_{UR}$ yields an algorithm, which achieves a competitive ratio of $0.4159$ for the Online Stochastic Matching with Timeouts problem.
\end{corollary}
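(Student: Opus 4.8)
The plan is to obtain this corollary as a direct instantiation of Theorem~\ref{thm:attn2}, so essentially no new machinery is needed; the work reduces to identifying the correct rate function for $\BB_{UR}$, checking that $\BB_{UR}$ meets \textbf{Property C}, and evaluating a single integral. First I would read off the rate function from Lemma~\ref{lem:blackbox}: that lemma guarantees that $\BB_{UR}$ probes each edge $e$ with probability at most $\tilde{g}_e$ and at least $\left(1-\tfrac{\lam(e,\tilde{\mathbf{g}})}{2}\right)\tilde{g}_e$. Comparing this to the form $\tilde{g}_e\,\mathsf{R}_{\BB}[\lam(e,\tilde{\mathbf{g}})]$ appearing in \textbf{Property C}, I would set $\mathsf{R}_{\BB_{UR}}[x]=1-x/2$.

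The next step is to verify that $\mathsf{R}_{\BB_{UR}}[x]=1-x/2$ satisfies all the regularity hypotheses demanded by \textbf{Property C} (and hence by Theorem~\ref{thm:attn2}). Since this function is affine with slope $-1/2$, it is non-increasing on $[0,1]$, it is convex (its second derivative is identically zero, so the convexity inequality holds with equality), and its first derivative is the constant $-1/2$, which is finitely bounded on $[0,1]$. Together with the upper bound of $\tilde{g}_e$ from Lemma~\ref{lem:blackbox}, this confirms that $\BB_{UR}$ satisfies \textbf{Property C} with $\mathsf{R}_{\BB_{UR}}$, as already asserted in the text following Observation~\ref{obs:pro}. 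Theorem~\ref{thm:attn2} then applies verbatim and yields a competitive ratio of $\int_0^1 e^{-x}\,\mathsf{R}_{\BB_{UR}}[e^{-x}]\,dx-\epsilon$.

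Finally I would carry out the integral evaluation by substituting $\mathsf{R}_{\BB_{UR}}[e^{-x}]=1-e^{-x}/2$ and splitting into two elementary pieces:
\[
\int_0^1 e^{-x}\left(1-\tfrac{e^{-x}}{2}\right)dx
=\int_0^1 e^{-x}\,dx-\tfrac{1}{2}\int_0^1 e^{-2x}\,dx
=\left(1-e^{-1}\right)-\tfrac{1}{4}\left(1-e^{-2}\right).
\]
A numerical evaluation gives $\left(1-e^{-1}\right)-\tfrac{1}{4}\left(1-e^{-2}\right)\approx 0.6321-0.2162=0.4159$, which matches the claimed ratio once the additive $\epsilon=o(1)$ error from the Monte-Carlo simulation is absorbed. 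There is no genuine obstacle here: the only points requiring care are confirming that the affine rate function legitimately fulfils the convexity and bounded-derivative conditions of \textbf{Property C}, and handling the simulation error exactly as in the proof of Theorem~\ref{thm:attn2}; the rest is the routine definite integral above.
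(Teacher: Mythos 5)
Your proposal is correct and takes essentially the same route as the paper, whose entire justification is the one-line remark preceding the corollary that plugs $\mathsf{R}_{\BB_{UR}}[x]=1-x/2$ into Theorem~\ref{thm:attn2}. Your explicit verification of \textbf{Property C} for the affine rate function and the evaluation $\int_0^1 e^{-x}\left(1-e^{-x}/2\right)dx=\left(1-e^{-1}\right)-\frac{1}{4}\left(1-e^{-2}\right)\approx 0.4159$ simply spell out the steps the paper leaves implicit.
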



\xhdr{Edge and Vertex-attenuation Combined.} 
Our final and currently most powerful framework combines both edge and vertex-attenuation. Notice that by design, edge-attenuation upper bounds the probability an edge will be probed in an offline round. Therefore, our black box only needs to satisfy \textbf{Property B} which is slightly less restrictive than \textbf{Property C}.

At the start of each round $t$, let every $u$ be safe with a target probability equal to $\gamma_t \in [0,1]$. From \textbf{Property B}, we have that each safe edge $e=(u,v)$ is probed during round $t$ with probability at least $g_e \alpha_t$ (conditioning on the arrival of $v$ and $u$ is safe at $t$), where $\alpha_t=\E[\RR_{\BB}[\lam(e,\g_{t,v})]] \ge \RR_{\BB}[\gamma_t]$ (same analysis as Theorem~\ref{thm:attn2}).  Using edge-attenuation, each safe edge $e$ is probed with probability equal to $\RR_{\BB}[\gamma_t] g_e$. Consequently, each safe $u$ at time $t$ will remain safe at $t+1$ with probability at least $1- \RR_{\BB}[\gamma_t]/n$. Through vertex-attenuation, each $u$ remains safe at $t+1$ with probability equal to $\gamma_{t+1}=\gamma_t(1- \RR_{\BB}[\gamma_t]/n)$. Thus, we ensure each edge is probed with a uniformly increasing ratio and every offline node is safe with a uniformly decreasing probability. Algorithm \ref{alg: the third attenu} describes this online framework denoted $\mathsf{ATTN}_3$.

\begin{algorithm}[!h]
		\label{alg: the third attenu}
		\caption{$\mathsf{ATTN}_{3}[\BB]$ }
		\DontPrintSemicolon
	 For times $1, 2, \ldots, t$ do\;
	 
	  Let each $u$ be safe with probability equal to $\gamma_t$. \;
 
		Let $v$ arrive at time $t$ and $G_t(v)$ be the graph of $v$ and its safe neighbors. \bluee{Let $\g_{t,v}=\{f_e/r_v| e \in G_t(v)\}$ be an induced feasible solution to LP \eqref{eqn:LP-off} from an optimal solution $\ff$ to LP \eqref{lp2:stoch-match}}. \;
	
		Apply $\BB[\g_{t,v}]$ and edge-attenuation to $G_t(v)$ such that each edge $e$ is probed with probability equal to $\alpha_t g_e$, where  $\alpha_t=\RR_{\BB}[\gamma_t]$. \;

		Apply vertex-attenuation to each $u$ such that each $u$ is safe at time $t+1$ with probability equal to $\gamma_{t+1}=\gamma_t (1-\alpha_t/n)$.

	\end{algorithm}	
	
	We can express a recurrence relation for $(\gamma_t, \alpha_t)$ as follows.
\begin{equation} \label{eqn:att3}
\textstyle \gamma_1=1,\quad 
\alpha_t=\RR_{\BB}[\gamma_t]; \quad
\gamma_{t+1}=\gamma_t\left(1-\frac{\alpha_t}{n}\right) 
\end{equation}

\begin{theorem}	\label{thm:attn3}
For any $\BB$ satisfying \textbf{Property B}, $\mathsf{ATTN}_3[\BB]$ has an online competitive ratio of $ (1-h(1)-\epsilon)$, where $h$ is the unique function satisfying $h'=-h\RR_{\BB}[h]$ with boundary condition $h(0)=1$. Here, $h'$ represents the first-order derivative of function h.
\end{theorem}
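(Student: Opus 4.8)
The plan is to follow the per-round analysis of Theorem~\ref{thm:attn2} to pin down the probability that a fixed edge is probed in each round, and then to exploit the \emph{adaptive} recurrence~\eqref{eqn:att3} so that the sum over rounds telescopes cleanly. Fix an edge $e=(u,v)$ and a round $t$. By construction of $\ATTN_3$ every offline vertex is safe at the start of round $t$ with probability exactly $\gamma_t$, and vertex-attenuation is applied independently across $U$, so the events $\{\SC_{u',t}\}$ remain pairwise negatively correlated exactly as in Lemma~\ref{lem:attn2}(2). Conditioning on $\SC_{u,t}$ and using $\Pr[\SC_{u',t}\mid \SC_{u,t}]\le\gamma_t$ together with Constraint~\eqref{lp-cons:vmatch}, I would show $\E[\lam(e,\g_{t,v})\mid \SC_{u,t}]\le\gamma_t$. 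Combining this with \textbf{Property B}, Jensen's inequality (convexity of $\RR_{\BB}$), and the fact that $\RR_{\BB}$ is non-increasing gives that the unattenuated probe probability of $e$ is at least $g_e\,\RR_{\BB}[\gamma_t]=g_e\,\alpha_t$; hence edge-attenuation can force it to equal $\alpha_t g_e$, which is exactly what the algorithm does.

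Next I would verify that both attenuations are feasible, i.e., that the imposed targets are never larger than the true quantities. Conditioned on $u$ being safe at $t$, the chance $u$ is matched during round $t$ is $\sum_{v':(u,v')\in E}\tfrac{r_{v'}}{n}\,\alpha_t\,g_{(u,v')}\,p_{(u,v')}=\tfrac{\alpha_t}{n}F_u\le\tfrac{\alpha_t}{n}$ by Constraint~\eqref{lp-cons:umatch}, so $u$ stays safe at $t+1$ with probability at least $\gamma_t(1-\alpha_t/n)=\gamma_{t+1}$; thus attenuating down to exactly $\gamma_{t+1}$ is legitimate. This is precisely the point where only \textbf{Property B} (not \textbf{Property C}) is needed: the upper bound $\le g_e$ on the raw probe probability is supplied by edge-attenuation rather than by the black box.

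With the per-round probabilities in hand, let $\A_{e,t}$ be the event that $e$ is effectively probed in round $t$. Then $\Pr[\A_{e,t}]=\tfrac{r_v}{n}\,\gamma_t\,\alpha_t g_e=\tfrac{f_e}{n}\,\gamma_t\RR_{\BB}[\gamma_t]$, and summing over $t$ gives $\Pr[e\text{ is probed}]=\tfrac{f_e}{n}\sum_{t=1}^n\gamma_t\RR_{\BB}[\gamma_t]$. The key simplification is that the recurrence~\eqref{eqn:att3} yields $\tfrac1n\gamma_t\RR_{\BB}[\gamma_t]=\gamma_t-\gamma_{t+1}$, so the sum telescopes to $\gamma_1-\gamma_{n+1}=1-\gamma_{n+1}$, whence $\Pr[e\text{ is probed}]=f_e(1-\gamma_{n+1})$. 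Summing $w_e p_e$ over all edges and using that the LP value upper-bounds $\E[\OPT]$ then yields a competitive ratio of $1-\gamma_{n+1}$ before passing to the limit.

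The main obstacle is the final step: identifying $\lim_{n\to\infty}\gamma_{n+1}$ with $h(1)$. I would read~\eqref{eqn:att3} as the forward-Euler scheme with step size $1/n$ for the initial-value problem $h'=-h\,\RR_{\BB}[h]$, $h(0)=1$, indexing continuous time by $x=t/n$. The hypothesis in \textbf{Property B} that $\RR_{\BB}$ has a finitely bounded first derivative on $[0,1]$ makes the right-hand side $y\mapsto -y\,\RR_{\BB}[y]$ Lipschitz, which guarantees both that $h$ exists and is unique and that the Euler iterates converge, so $\gamma_{n+1}=h(1)+O(1/n)$. Folding the $O(1/n)$ discretization gap together with the Monte-Carlo estimation error for the safe-probabilities (handled exactly as in Theorem~\ref{thm:attn2}) into a single additive $\epsilon$ then gives the stated ratio $1-h(1)-\epsilon$.
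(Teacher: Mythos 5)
Your proposal is correct and takes essentially the same route as the paper: you establish the per-round probe probability $\tfrac{r_v}{n}\gamma_t\alpha_t g_e$ (which the paper delegates to the discussion preceding Algorithm~\ref{alg: the third attenu}, citing the Theorem~\ref{thm:attn2} analysis for $\alpha_t \ge \RR_{\BB}[\gamma_t]$ and for feasibility of both attenuations) and then read the recurrence \eqref{eqn:att3} as a forward-Euler discretization of $h'=-h\,\RR_{\BB}[h]$, $h(0)=1$. Your exact telescoping $\sum_{t=1}^n \gamma_t\alpha_t/n=\gamma_1-\gamma_{n+1}=1-\gamma_{n+1}$ combined with a Lipschitz-based Euler convergence argument is a slightly more rigorous rendering of the paper's informal passage from the Riemann sum $\frac{1}{n}\sum_t h((t-1)/n)\,\RR_{\BB}[h((t-1)/n)]$ to $\int_0^1 h\,\RR_{\BB}[h]\,dx=h(0)-h(1)$, but the substance and the conclusion are identical.
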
	
\begin{proof}
Consider an edge $e=(u,v)$. It will be probed with probability equal to $f_e\sum_{t=1}^n \frac{\gamma_t \alpha_t}{n}$. 
From Observation \ref{obs:pro}, $\RR_{\BB}[x] \in [0,1]$ for all $x\in [0,1]$. 
From \textbf{Property B} and Equation \eqref{eqn:att3}, we know that $\{\alpha_t \}$ is an increasing sequence and $\{\gamma_t\}$ is a decreasing sequence with $\gamma_t \ge 1/e $ and $\alpha_t \le \RR_{\BB}[1/e]$ for all $t$.

Define a function $h: [0,1] \rightarrow [0,1]$ such that $h((t-1)/n)=\gamma_t$ for all $t\in [n]$. Thus we have $h(0)=1$. Equation \eqref{eqn:att3} implies that 
$$\frac{h(t/n)-h((t-1)/n)}{1/n}=-h((t-1)/n) \RR_{\BB}[h((t-1)/n)]$$ 
Letting $x=(t-1)/n$ and the above equation yields $\frac{h(x+1/n)-h(x)}{1/n}=-h(x) \RR_{\BB}[h(x)]$. Now letting $n \rightarrow \infty$, we can see that $h$ satisfies the differential equation $h'=-h \RR_{\BB}[h]$ with boundary condition $h(0)=1$. 

Given $h$, we have
\begin{align*}
\sum_{t=1}^n \frac{\alpha_t \gamma_t}{n}& =\frac{1}{n} \sum_{t=1}^n h((t-1)/n) \RR_{\BB}[h((t-1)/n)] \\
&=\int_0^1 h(x) \RR_{\BB}[h(x)] dx=h(0)-h(1)=1-h(1)
\end{align*}
Simulation error subtracts at most $O(\epsilon)$ in the final ratio (see Section~\ref{app:att3}). Hence, this completes the proof of the theorem.
\end{proof}
$\BB_{UR}$ satisfies \textbf{Property B} with $\RR_{\BB_{UR}}[x]=1-x/2$. Plugging $\RR_{\BB_{UR}}$ into the above theorem, we get $h(x)=2/(1+e^x)$, which implies $\mathsf{ATTN}_3[\BB_{UR}]$ has an online ratio of $1-h(1)-\epsilon \ge  0.4621$. 
\begin{corollary} \label{thm:OC-mainproblem}
The Attenuation framework	$\mathsf{ATTN}_3[\BB]$ combined with $\BB_{UR}$ yields an algorithm, which achieves a competitive ratio of $0.4621$ for the Online Stochastic Matching with Timeouts problem.
\end{corollary}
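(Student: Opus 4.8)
The plan is to obtain this corollary as a direct instantiation of Theorem~\ref{thm:attn3}, so the only genuine work is (i) confirming that $\BB_{UR}$ meets the hypotheses of that theorem and (ii) solving the resulting differential equation in closed form. First I would verify that $\BB_{UR}$ satisfies \textbf{Property B}. Lemma~\ref{lem:blackbox} already establishes that $\BB_{UR}$ probes each edge $e$ with probability at least $\left(1-\frac{\lam(e,\tilde{\mathbf{g}})}{2}\right)\tilde{g}_e$ and at most $\tilde{g}_e$, which identifies the rate function as $\RR_{\BB_{UR}}[x]=1-x/2$. It then remains to check the structural requirements on $\RR_{\BB_{UR}}$: being affine with slope $-1/2$, it is non-increasing and (trivially) convex on $[0,1]$, and its first derivative is the constant $-1/2$, hence finitely bounded. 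Thus all conditions of \textbf{Property B} hold.

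With the hypotheses verified, I would invoke Theorem~\ref{thm:attn3} to conclude that $\mathsf{ATTN}_3[\BB_{UR}]$ has competitive ratio $1-h(1)-\epsilon$, where $h$ is the unique solution of $h'=-h\,\RR_{\BB_{UR}}[h]=-h\bigl(1-\tfrac{h}{2}\bigr)$ with $h(0)=1$. The central step is to solve this initial value problem. Since the right-hand side $-h(1-h/2)$ is locally Lipschitz in $h$, Picard--Lindel\"{o}f guarantees uniqueness, so it suffices to exhibit one solution and verify it. I would obtain the candidate via the substitution $w=1/h$, which linearises this Bernoulli-type equation into $w'-w=-1/2$; solving and imposing $w(0)=1$ gives $w=\tfrac{1}{2}(1+e^{x})$, that is, $h(x)=\frac{2}{1+e^{x}}$. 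I would then check directly that $h(0)=1$ and that $h'(x)=-\frac{2e^{x}}{(1+e^{x})^{2}}$ equals $-h(x)\bigl(1-h(x)/2\bigr)$, confirming both the ODE and the boundary condition.

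Finally I would evaluate the ratio at $x=1$: since $h(1)=\frac{2}{1+e}$, the competitive ratio is $1-h(1)=\frac{e-1}{e+1}\approx 0.4621$, and because the simulation error $\epsilon$ can be driven to $o(1)$ (as guaranteed in Section~\ref{sec:frame}), the stated bound $\ge 0.4621$ follows. Honestly there is no serious obstacle here, since Theorem~\ref{thm:attn3} carries the analytical burden and the remaining content is the routine verification of \textbf{Property B} together with solving one separable Bernoulli ODE. The only place warranting a moment's care is confirming that the affine $\RR_{\BB_{UR}}$ is genuinely convex (the convexity inequality holds with equality for a linear function), so that the Jensen step inside the proof of Theorem~\ref{thm:attn3} applies without loss; this is immediate and introduces no slack into the final ratio.
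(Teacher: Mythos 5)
Your proposal is correct and follows essentially the same route as the paper: the paper likewise notes that $\BB_{UR}$ satisfies \textbf{Property B} with $\RR_{\BB_{UR}}[x]=1-x/2$ (from Lemma~\ref{lem:blackbox}), plugs this into Theorem~\ref{thm:attn3}, and reads off $h(x)=2/(1+e^x)$, giving the ratio $1-h(1)-\epsilon=\frac{e-1}{e+1}-\epsilon\approx 0.4621$. Your extra details (the Bernoulli substitution $w=1/h$, Picard--Lindel\"{o}f uniqueness, and the check that the affine $\RR_{\BB_{UR}}$ is convex so the Jensen step applies) simply make explicit what the paper leaves implicit.
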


%
%
%
%


%

\subsection{Extension to the Two-sided Timeouts}	
\label{sec:extension}
The online attenuation framework combined with an offline black box can be extended to more general models.
In this section, we give an example by showing how the first attenuation framework together with an offline black box $\BB$ satisfying \textbf{Property A} can be used for the generalization of Stochastic Matching with timeouts on both offline and online vertices. We believe the other two frameworks can be used to attack the generalized model as well.

In this model, each offline vertex $u$ has a finite timeout constraint of $t_u <n$, \ie  each $u$ can be probed at most $t_u$ times over the $n$ rounds in addition to the constraints of our previous setting. Hence, the constraint \ref{lp-cons:upatience} in LP (\ref{lp2:stoch-match}) is a valid constraint in the benchmark. 

\begin{theorem} \label{thm:ext}
For any $\BB$ satisfying \textbf{Property A} with $\alpha$, $\mathsf{ATTN}_1[\BB]$ yields an online competitive ratio of $\alpha e^{-\alpha}-\epsilon$ for the Online Stochastic Matching with Two-sided Timeouts problem.
\end{theorem}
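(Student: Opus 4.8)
The plan is to mirror the proof of Theorem~\ref{thm:attn1}, the only new feature being that an offline vertex $u$ can become unusable in two ways: by being matched, or by exhausting its probe budget $t_u$. Call $u$ \emph{available} at the start of round $t$ (event $\SC_{u,t}$) if it is both unmatched and has been probed strictly fewer than $t_u$ times. Since $\mathsf{ATTN}_1$ attenuates every safe edge down to probing probability exactly $\alpha g_e$ conditioned on the arrival of $v$ and on $\SC_{u,t}$, for $e=(u,v)$ we obtain
\[
\Pr[e \text{ is probed}] \;=\; \sum_{t=1}^n \frac{r_v}{n}\,\alpha g_e \,\Pr[\SC_{u,t}] \;=\; \alpha f_e\cdot\frac1n\sum_{t=1}^n \Pr[\SC_{u,t}].
\]
As $\E[\ALG]=\sum_e w_e p_e\,\Pr[e\text{ is probed}]$ and the benchmark is $\sum_e w_e f_e p_e$, it suffices to prove $\frac1n\sum_{t}\Pr[\SC_{u,t}]\ge e^{-\alpha}$; this yields the ratio $\alpha e^{-\alpha}$, and the simulation error is absorbed exactly as in Theorem~\ref{thm:attn1} to give the stated $\alpha e^{-\alpha}-\epsilon$.

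To lower bound $\Pr[\SC_{u,t}]$ I would couple the run with an i.i.d.\ ``attempt'' sequence in which, in each round and regardless of $u$'s state, a \emph{match-attempt} occurs with probability $\frac{\alpha}{n} F_u$ and a \emph{failed-probe attempt} with probability $\frac{\alpha}{n}(D_u-F_u)$, where $F_u=\sum_{e\in\partial(u)}f_e p_e$ and $D_u=\sum_{e\in\partial(u)}f_e$; these are read off from the same randomness the algorithm uses, so every actual probe (resp.\ match) of $u$ is an attempt (resp.\ match-attempt). Hence $u$ is available at $t$ whenever, over rounds $1,\dots,t-1$, there is no match-attempt and there are fewer than $t_u$ failed-probe attempts. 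Conditioning on the absence of match-attempts turns the failed-probe count into a binomial, so
\[
\Pr[\SC_{u,t}] \;\ge\; \Big(1-\tfrac{\alpha F_u}{n}\Big)^{t-1}\,\Pr\!\Big[\mathrm{Bin}\big(t-1,\ \tfrac{\alpha(D_u-F_u)/n}{\,1-\alpha F_u/n\,}\big)<t_u\Big].
\]
Now I invoke the LP constraints: \eqref{lp-cons:umatch} gives $F_u\le1$ and \eqref{lp-cons:upatience} gives $D_u\le t_u$. Bounding each factor by its worst case ($F_u\le1$ in the first, $D_u-F_u\le t_u$ in the second) and letting $n\to\infty$ with $\tau=t/n$, the factors converge to $e^{-\alpha\tau}$ and $\Pr[\mathrm{Poisson}(\alpha t_u\tau)<t_u]$ respectively, giving
\[
\frac1n\sum_{t=1}^n\Pr[\SC_{u,t}] \;\ge\; J(t_u,\alpha):=\int_0^1 e^{-\alpha\tau}\,\Pr[\mathrm{Poisson}(\alpha t_u\tau)<t_u]\,d\tau .
\]

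The remaining, and hardest, step is the analytic inequality $J(m,\alpha)\ge e^{-\alpha}$ for every integer $m\ge1$ and $\alpha\in(0,1)$. Writing $\Pr[\mathrm{Poisson}(\alpha m\tau)<m]=\Pr[S_m>\tau]$ for $S_m\sim\mathrm{Gamma}(m,\alpha m)$ lets one rewrite $J(m,\alpha)=\frac1\alpha\big(1-\E[e^{-\alpha\min(S_m,1)}]\big)$; since $S_m$ concentrates at its mean $1/\alpha>1$ as $m$ grows, a short coupling shows $J(m,\alpha)$ is nondecreasing in $m$, so the worst case is $m=t_u=1$, where $S_1\sim\mathrm{Exp}(\alpha)$ and $J(1,\alpha)=\frac{1-e^{-2\alpha}}{2\alpha}$. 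Everything then reduces to $\frac{1-e^{-2\alpha}}{2\alpha}\ge e^{-\alpha}$, i.e.\ $g(\alpha):=1-e^{-2\alpha}-2\alpha e^{-\alpha}\ge0$, which holds because $g(0)=0$ and $g'(\alpha)=2e^{-\alpha}\big(e^{-\alpha}-1+\alpha\big)\ge0$ by the elementary bound $e^{-\alpha}\ge1-\alpha$. I expect the main obstacle to be precisely this joint control of the two failure modes: crude estimates — a union bound over ``matched'' and ``timed out,'' or Markov's inequality on the probe count — each lose an additive $\Theta(\alpha)$ and land strictly below $e^{-\alpha}$, so the coupling keeping matching and timeout failures simultaneously tracked, together with the monotonicity of $J(m,\alpha)$ pinning the worst case at $t_u=1$, is genuinely needed rather than cosmetic.
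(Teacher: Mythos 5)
Your skeleton is the same as the paper's. The paper proves exactly your product bound as Lemma~\ref{lemma:scut}: it couples the run to indicator sequences $X_{e,t'},Y_{e,t'},Z_{e,t'}$ (arrival, probe-decision, presence), defines $A_1$ (no match-attempt in rounds $1,\dots,t-1$) and $A_2$ (fewer than $t_u$ probe-attempts), and observes $\Pr[\SC_{u,t}]\ge \Pr[A_1\wedge A_2]$ with $\Pr[A_1]\ge (1-\alpha/n)^{t-1}$ --- your match-attempt/failed-probe dichotomy is that same coupling. The divergence is only in how the timeout event $A_2$ is handled, and here your closing meta-claim is wrong: the paper bounds $\E[\text{probe count}\mid A_1]\le \frac{\alpha t_u}{n}(t-1)$ (via $\sum_{e\in\partial(u)}f_e(1-p_e)\le t_u-F_u$, i.e.\ Constraints~\eqref{lp-cons:umatch} and~\eqref{lp-cons:upatience}) and applies \emph{Markov's inequality} to get $\Pr[A_2\mid A_1]\ge 1-\frac{\alpha(t-1)}{n}$, and this loses nothing: since the factor enters multiplicatively inside the sum, the limit is $\alpha\int_0^1 e^{-\alpha\tau}(1-\alpha\tau)\,d\tau = \alpha\bigl[\tau e^{-\alpha\tau}\bigr]_0^1 = \alpha e^{-\alpha}$ \emph{exactly} (the integrand is the derivative of $\tau e^{-\alpha\tau}$). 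An additive union bound over ``matched'' and ``timed out'' does fall short, as you say, but conditional Markov does not --- it is precisely the paper's proof and it lands on the stated constant on the nose.

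Your sharper route (keeping the exact binomial/Poisson tail) is sound up to the definition of $J(m,\alpha)$, the identity $J(m,\alpha)=\frac1\alpha\bigl(1-\E[e^{-\alpha\min(S_m,1)}]\bigr)$ is correct, and the $m=1$ case $\frac{1-e^{-2\alpha}}{2\alpha}\ge e^{-\alpha}$ is verified correctly. The genuine gap is the reduction to $m=1$: ``a short coupling shows $J(m,\alpha)$ is nondecreasing in $m$'' is asserted, not proved, and it is the hardest step of your argument. It amounts to $\Pr[S_{m+1}>\tau]\ge\Pr[S_m>\tau]$ for all $\tau\in[0,1]$, i.e.\ $\Pr[\mathrm{Poisson}((m{+}1)\mu)\le m]\ge\Pr[\mathrm{Poisson}(m\mu)\le m-1]$ for $\mu\le 1$, an Anderson--Samuels-type Poisson CDF inequality. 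Note that no stochastic-dominance coupling can exist: all $S_m$ have the same mean $1/\alpha$, so $S_{m+1}\ge_{\mathrm{st}} S_m$ would force equality in distribution; the distribution functions genuinely cross, and only the left-of-the-mean tail comparison holds (and when $\alpha=1$ you need it all the way up to the mean $\tau=1$). So this step needs a real proof, and the concentration heuristic does not supply one. The clean repair within your own framework is to drop monotonicity entirely and bound the tail by Markov, $\Pr[\mathrm{Poisson}(\alpha m\tau)<m]\ge 1-\alpha\tau$ for every $m$, which by the integral identity above gives $J(m,\alpha)\ge e^{-\alpha}$ uniformly in $m$ --- at which point your proof coincides with the paper's proof of Theorem~\ref{thm:ext} (itself patterned on Theorem~\ref{thm:attn1}), with the simulation error absorbed into $\epsilon$ as you indicate.
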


Recall that $\SC_{u,t}$ is the probability that $u$ is safe at time $t$. In this new setting, $u$ is safe if $u$ is not matched and the timeout of $u$ has not been exhausted. The lemma \ref{lemma:scut} gives a lower bound on $\Pr[\SC_{u,t}]$ when we apply $\ATTN_1[\BB]$ using any $\BB$ satisfying \textbf{Property A} with $\alpha$.

\begin{lemma}
	\label{lemma:scut}
We have the following bound on the probability of the safe event.
\[ \Pr[\SC_{u,t}] \ge \left(1-\frac{\alpha}{n}\right)^{t-1}\left(1-\frac{\alpha(t-1)}{n}\right).\]
\end{lemma}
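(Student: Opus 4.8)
\textbf{Proof proposal for Lemma~\ref{lemma:scut}.}

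The plan is to track the probability that $u$ remains safe across rounds, where ``safe'' now means both that $u$ has not been matched \emph{and} that its timeout $t_u$ has not been exhausted. Since these are two separate ways for $u$ to become unsafe, I would bound $\Pr[\SC_{u,t}]$ from below by controlling each failure mode and combining them. First I would recall from the edge-attenuation guarantee of $\ATTN_1[\BB]$ that, conditioned on $u$ being safe at the start of a round, each incident edge $e=(u,v)$ is probed with probability exactly $\alpha g_e = \alpha f_e/r_v$ when $v$ arrives. Summing over the arrival distribution, the probability that $u$ is \emph{matched} in any given round (conditioned on being safe) is at most $\alpha F_u / n \le \alpha/n$, exactly as in the proof of Theorem~\ref{thm:attn1}, using Constraint~\eqref{lp-cons:umatch}. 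This accounts for the $(1-\alpha/n)^{t-1}$ factor: over $t-1$ rounds the ``not matched'' event survives with probability at least $(1-\alpha/n)^{t-1}$.

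Next I would handle the timeout constraint, which is the genuinely new ingredient and the main obstacle. The quantity $u$ must control is the total number of \emph{probes} on $u$ over the rounds, which must stay below $t_u$. The key observation is that in each round, conditioned on $u$ being safe, the expected number of probes on $u$ is $\sum_{e=(u,v)} \frac{r_v}{n}\alpha g_e = \alpha F'_u/n$ where $F'_u = \sum_{e \in \partial(u)} f_e \le t_u$ by Constraint~\eqref{lp-cons:upatience}; hence the expected number of probes per round is at most $\alpha t_u / n$. Over $t-1$ rounds the expected total number of probes is at most $\alpha t_u (t-1)/n$. I would then argue that the probability the timeout is exceeded can be bounded by a Markov-type or union-bound argument: the ``bad'' event that $u$'s timeout is exhausted by time $t$ requires at least $t_u$ probes to have occurred. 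The factor $\bigl(1-\tfrac{\alpha(t-1)}{n}\bigr)$ suggests bounding the probability of a \emph{first} probe landing (i.e., a successful match attempt being initiated) rather than exhausting the full budget — more precisely, I expect the intended argument shows that the probability $u$ is probed at all in a given safe round is at most $\alpha/n$ per unit of timeout budget, and a union bound over the $t-1$ rounds gives probability at most $\alpha(t-1)/n$ that the relevant bad event occurs, whence a survival factor $\bigl(1-\tfrac{\alpha(t-1)}{n}\bigr)$.

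To combine the two effects I would condition carefully: write $\SC_{u,t}$ as the intersection of ``not matched in rounds $1,\dots,t-1$'' and ``timeout not exhausted by round $t$,'' and lower-bound the joint probability by the product of the two survival factors. The cleanest route is an inductive or telescoping argument on $t$, showing
\[
\Pr[\SC_{u,t}] \ge \Bigl(1-\frac{\alpha}{n}\Bigr)\Pr[\SC_{u,t-1}] - (\text{timeout loss in round } t-1),
\]
and then verifying that the claimed closed form satisfies this recursion. The main obstacle is disentangling the matching and timeout failure modes without double-counting, since a probe both advances the timeout counter and carries a chance of matching; I expect the correct handling is to treat the matching survival multiplicatively (each safe round independently contributes a factor $\ge 1-\alpha/n$) while treating the timeout exhaustion additively via a union bound over rounds, which is exactly why the bound takes the mixed product-and-linear form $\bigl(1-\tfrac{\alpha}{n}\bigr)^{t-1}\bigl(1-\tfrac{\alpha(t-1)}{n}\bigr)$. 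I would finish by checking the base case $t=1$ (where both factors are $1$ and $u$ is trivially safe) and confirming the inductive step closes.
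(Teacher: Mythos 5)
Your proposal follows the same overall route as the paper: decompose safety into ``not yet matched'' and ``timeout not yet exhausted,'' handle the first multiplicatively at rate at most $\alpha F_u/n \le \alpha/n$ per round, and handle the second by bounding the expected number of probes on $u$ per round by $\alpha t_u/n$ via Constraint~\eqref{lp-cons:upatience} and then applying Markov's inequality with threshold $t_u$, so that the $t_u$ cancels and leaves the additive loss $\alpha(t-1)/n$. (Your digression about ``the probability of a first probe landing'' is a red herring; the factor $1-\alpha(t-1)/n$ comes purely from this cancellation in Markov, which your ``per unit of timeout budget'' phrasing correctly gestures at.)

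The genuine gap sits exactly at the point you flag as ``the main obstacle'' and then leave unresolved: how the two factors combine. The paper does \emph{not} multiply two unconditional survival probabilities, nor does it run your proposed recursion (timeout exhaustion is a cumulative event, so a per-round ``timeout loss'' term is not well defined). Instead it introduces independent indicator coins $X_{e,t'},Y_{e,t'},Z_{e,t'}$ (arrival, probe decision, edge presence) and the surrogate events $A_1=\{\sum_{t'<t}\sum_{e\in\partial(u)}X_{e,t'}Y_{e,t'}Z_{e,t'}=0\}$ and $A_2=\{\sum_{t'<t}\sum_{e\in\partial(u)}X_{e,t'}Y_{e,t'}\le t_u-1\}$, which over-count probes relative to the true process and hence satisfy $\Pr[\SC_{u,t}]\ge\Pr[A_1\wedge A_2]=\Pr[A_1]\Pr[A_2\mid A_1]$. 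The crux is bounding $\Pr[A_2\mid A_1]$: conditioning on ``no successful probe'' is not innocuous, since it forces every probe that did occur to have failed, which \emph{inflates} the expected probe count on low-$p_e$ edges. The per-round conditional expectation is $\sum_{e\in\partial(u)}\frac{\alpha f_e(1-p_e)/n}{1-\alpha F_u/n}$, which exceeds the unconditional $\sum_e \alpha f_e/n$ whenever $p_e<\alpha F_u/n$, so your unconditional bound $\E[\text{probes}]\le \alpha t_u(t-1)/n$ does not transfer to the conditional law for free. The paper explicitly verifies $\frac{\alpha(t_u-F_u)/n}{1-\alpha F_u/n}\le \frac{\alpha t_u}{n}$ (using $\alpha t_u\le n$) and only then applies Markov to get $\Pr[A_2\mid A_1]\ge 1-\frac{\alpha(t-1)}{n}$. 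Without this conditional-expectation computation, the product-form bound you propose is unjustified, so your sketch is the right skeleton but is missing the one calculation that makes it close.
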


\begin{proof}
Consider a given vertex $u$. For each $e \in \partial(u)$ and each $t' \in [n]$, let $X_{e,t'}$ be the indicator that $e$ comes at $t'$; $Y_{e,t'}$ be the indicator that $e$ is probed when $e$ comes at $t'$ and $u$ is safe at $t'$; $Z_{e,t'}$ be the indicator  that $e$ is present when probed. \bluee{The random variables $\{X_{e, t'}, Y_{e,t'}, Z_{e,t'}\}$ are all independent for any given $(t',e)$. In particular, for each $e=(u,v) \in \partial(u)$ and $t'\in[n]$, we have that 
\[	\Pr[X_{e,t'}=1]=r_v/n, \; \; \Pr[Y_{e,t'}=1]=\alp g_e=\alp f_e/r_v, \; \; \Pr[Z_{e,t'}]=p_e \]
The second equality is due to $\ATTN_1[\BB]$ where $\BB$ satisfies \textbf{Property A}. }

Let $\SC^{1}_{u,t}$ be the event that $u$ is not matched at time $t$ and $\SC^2_{u,t}$ be the event that $u$ is probed at most $t_u-1$ at the beginning of time $t$. Define $A_1$ as the event that $\sum_{t'=1}^{t-1} \sum_{e \in \partial(u) }X_{e,t'} Y_{e,t'} Z_{e,t'}=0$ and $A_2$ as the event that $\sum_{t'=1}^{t-1} \sum_{e \in \partial(u) }X_{e,t'} Y_{e,t'} \le t_u-1$. Observe that $\Pr[\SC^{1}_{u,t} \bigwedge \SC^{2}_{u,t}] \ge \Pr[A_1 \wedge A_2]$. Let us now lower bound the value of $\Pr[A_1 \bigwedge A_2]$. 

Recall that $F_u =\sum_{e \in \partial(u)} f_e p_e$. For each given $t'<t$, we know the following --- $\Pr[\sum_{e \in \partial(u)} X_{e, t'} Y_{e,t'} Z_{e,t'} =0]=1-\alpha F_u/n \ge 1-\alpha/n$. Therefore we have $\Pr[A_1] \ge  (1-\alpha/n)^{t-1}$. Notice that for each given $t'<t$,
\begin{align*}
 \E\left[\sum_{e \in \partial(u)} X_{e,t'} Y_{e,t'}\given{A_1}\right] 
& = \E\left[\sum_{e \in \partial(u)} X_{e,t'} Y_{e,t'} \given{\sum_{e \in \partial(u)} X_{e, t'} Y_{e,t'} Z_{e,t'} =0}\right] \\
& =  \sum_{e\in \partial(u)}\Pr\left[ X_{e,t'}=Y_{e,t'}=1\given{\sum_{e \in \partial(u)} X_{e, t'} Y_{e,t'} Z_{e,t'} =0}\right] \\
&= \sum_{e\in \partial(u)} \frac{\Pr\left[ X_{e,t'}=Y_{e,t'}=1,  \sum_{e \in \partial(u)} X_{e, t'} Y_{e,t'} Z_{e,t'} =0 \right]}{\Pr[\sum_{e \in \partial(u)} X_{e, t'} Y_{e,t'} Z_{e,t'} =0]} \\
&=\sum_{e\in \partial(u)} \frac{\Pr\left[ X_{e,t'}=Y_{e,t'}=1,  Z_{e,t'}=0 \right]}{\Pr[\sum_{e \in \partial(u)} X_{e, t'} Y_{e,t'} Z_{e,t'} =0]}\\
&=\sum_{e\in \partial(u)}  \frac{(r_v/n) (\alpha g_e) (1-p_e)}{1- \alpha F_u/n}
 = \sum_{e\in \partial(u)}  \frac{\alpha f_e (1-p_e)/n}{1- \alpha F_u/n}\\
& \le  \frac{\alpha (t_u-F_u) /n}{1- \alpha F_u/n} \le \frac{\alpha t_u}{n}
\end{align*}

Thus we get $\E[\sum_{t'=1}^{t-1}\sum_{e \in \partial(u)} X_{e,t'} Y_{e,t'}\given{A_1}] \le
 \frac{\alpha t_u}{n} (t-1)$, which implies that $\Pr[A_2\given{A_1}] \ge \left(1-\frac{\alpha}{n}(t-1)\right)$ by Markov's Inequality. Therefore 
\[\Pr[\SC_{u,t}]=\Pr\left[\SC^1_{u,t} \bigwedge \SC^2_{u,t}\right] \ge \Pr\left[A_1 \bigwedge A_2\right] \ge \left(1-\frac{\alpha}{n}\right)^{t-1}\left(1-\frac{\alpha (t-1)}{n}\right) \qedhere \]
\end{proof}

Let us now prove Theorem \ref{thm:ext}. 

\begin{proof}
The proof is very similar to that of Theorem \ref{thm:attn1}. Consider a single edge $e=(u,v)$, we have \bluee{
\[  \Pr[e \mbox{ is probed}] \ge \sum_{t=1}^{n} \frac{r_v}{n} \alpha g_{e} \left(1-\frac{\alpha}{n}\right)^{t-1} \left(1-\frac{\alpha(t-1)}{n}\right) \ge  f_{e} \alpha e^{-\alpha} \qedhere
\] }
\end{proof}

Plugging $\BB_{UR}$ with $\alpha=0.5$, we get a ratio of $0.303$ for this generalized model. 
\begin{corollary} \label{thm:OC-bothsides}
	The first framework $\mathsf{ATTN}_1[\BB]$ combined with $\BB_{UR}$ yields an algorithm, which achieves a competitive ratio of $0.303$ for the Online Stochastic Matching with Two-sided Timeouts problem.
\end{corollary}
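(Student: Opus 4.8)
The plan is to obtain Corollary \ref{thm:OC-bothsides} as a direct instantiation of Theorem \ref{thm:ext}, plugging in $\BB_{UR}$ as the black box and using the sharpest constant $\alpha$ that $\BB_{UR}$ guarantees under \textbf{Property A}.

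First I would verify that $\BB_{UR}$ satisfies \textbf{Property A} with $\alpha = 1/2$. By Lemma \ref{lem:blackbox}, for any feasible $\tilde{\mathbf{g}}$ the black box probes each edge $e$ with probability at least $\left(1 - \tfrac{\lam(e,\tilde{\mathbf{g}})}{2}\right)\tilde{g}_e$. Feasibility for the polytope \eqref{eqn:LP-off} forces $\sum_{e'} \tilde{g}_{e'} p_{e'} \le 1$, hence $\lam(e,\tilde{\mathbf{g}}) = \sum_{e' \neq e} \tilde{g}_{e'} p_{e'} \le 1$ and the attenuation factor is at least $1 - 1/2 = 1/2$. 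Thus every edge is probed with probability at least $\tfrac{1}{2}\tilde{g}_e$, which is exactly \textbf{Property A} with $\alpha = 1/2$.

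A point I would emphasize is that \textbf{Property A} is a \emph{per-round} statement about the offline subproblem on the star graph $G_t(v)$, and it is unaffected by the offline timeouts. The timeout $t_u$ enters the analysis only through which offline vertices are \emph{safe} — that is, through the graph $G_t(v)$ and the event $\SC_{u,t}$ — and this has already been absorbed into Lemma \ref{lemma:scut} and the proof of Theorem \ref{thm:ext}. Hence no new black-box analysis is needed, and $\BB_{UR}$ remains a valid \textbf{Property A} black box in the two-sided model.

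With $\alpha = 1/2$ established, I would invoke Theorem \ref{thm:ext} to conclude that $\mathsf{ATTN}_1[\BB_{UR}]$ attains a competitive ratio of $\alpha e^{-\alpha} - \epsilon = \tfrac{1}{2} e^{-1/2} - \epsilon$, and finish with the numerical evaluation $\tfrac{1}{2} e^{-1/2} \approx 0.3033$, taking $\epsilon = o(1)$ small enough to retain a ratio of at least $0.303$. I do not expect a genuine obstacle here: all the substantive work lives in Theorem \ref{thm:ext} (and in Lemma \ref{lemma:scut}), and the corollary is a one-parameter specialization. The only care needed is confirming that $\alpha = 1/2$ is the sharpest constant $\BB_{UR}$ offers — since $\alpha e^{-\alpha}$ is increasing on $[0,1]$, a larger $\alpha$ would help, but the worst case $\lam(e,\tilde{\mathbf{g}}) = 1$ of $\BB_{UR}$ caps the guarantee at $\alpha = 1/2$, so $0.303$ is precisely the ratio this black box delivers.
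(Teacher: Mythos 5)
Your proposal is correct and takes essentially the same route as the paper: the paper likewise observes that $\BB_{UR}$ satisfies \textbf{Property A} with $\alpha = 1/2$ (from Lemma \ref{lem:blackbox} together with $\lam(e,\tilde{\mathbf{g}}) \le 1$, forced by the polytope \eqref{eqn:LP-off}) and simply plugs $\alpha = 0.5$ into Theorem \ref{thm:ext} to get $\alpha e^{-\alpha} \approx 0.3033$, hence $0.303$ after the $\epsilon$ loss. Your additional remark that the offline timeouts $t_u$ affect only the safety events $\SC_{u,t}$, already absorbed into Lemma \ref{lemma:scut} and the proof of Theorem \ref{thm:ext}, is precisely the implicit justification the paper relies on.
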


\section{Lower Bound to the Benchmark LP}
\label{sec:lb}

Here we present an unconditional lower bound for this LP due to the stochasticity of the problem. We call this lower bound a \textit{stochasticity gap}, similar to the concept of an integrality gap.
	
	 Consider a complete bipartite graph with $|U| = |V| = n$. Let the edge probabilities $p_e$ for all edges be $1/n$ and the rewards $w_e$ be $1$. Let the patience values for all vertices be $n$. Notice that assigning $f_e = 1$ for every edge is a feasible solution to LP \eqref{lp2:stoch-match}. Hence, the optimal LP value is at least $n$. However, we will show that any online algorithm cannot perform better than $(1-1/e)n$. Therefore, the stochasticity gap for this LP is at least $(1-1/e) \approxeq 0.63$.
	 
	 Consider any vertex $u \in U$. We have the following:
	 \begin{align}
		 \Pr[\text{$u$ is matched}]
		 & =1-\Pr\left[\bigwedge_{t=1}^{n} \text{$u$ is not matched at $t$}\right]&\\
		 & =1-\prod_{t=1}^{n}\Pr[\text{u was not matched at t}]  \label{lb:cons2}\\
		 &\leq 1 - \prod_{t=1}^{n}\left( 1- \frac{1}{n} \times \frac{1}{n} \times n \right)\label{lb:cons3}\\
		 &\leq 1 - 1/e -o(1)
	 \end{align}
	 
	 Equality \eqref{lb:cons2} is due to independence. Inequality \eqref{lb:cons3} uses union bound, and the facts that (1) $p_e=1/n$ and (2) with probability $1/n$ each $v$ is drawn in each round. By applying linearity of expectation, we claim that no algorithm can do better than $(1-1/e)n$.

\section{Conclusion and Future Directions}
	We gave a general framework for the Online Stochastic Matching with Timeouts problem and its extension. This led to improved competitive ratios for the former and first constant factor ratio for the latter. More importantly, the frameworks are general enough to obtain further improvements by simply finding a better black box for the offline problem on star graphs.  
	One future direction is to increase the competitive ratio by designing better black boxes. Another future direction is to design similar framework(s) for the various other online stochastic matching problems, such as $b$-matching. We believe these frameworks have the potential to give a unified framework for many of the stochastic matching problems.

\bibliographystyle{acm}
\bibliography{refs}

\appendix

\section{Appendix}

\subsection{Error Accumulation in the First Attenuation Framework}
\label{app:att1}

Our simulation-based edge-attenuation approach is very similar to that shown in \cite{AGMesa}. For more details, please refer to the Appendix B in \cite{AGMesa}. Here we assume that with probability $(1-\epsilon)$, we can similarly obtain that all safe edges with $f_e \ge \epsilon/n$ should be probed with probability $[\alpha f_e/(1+\epsilon), \alpha f_e/(1-\epsilon)]$ in all rounds. For those edges with $f_e <\epsilon/n$, we add no attenuation.

Now we show that the error from simulation accumulates at most $O(1)$ times in the final ratio. Consider an edge $e=(u,v)$. From the analysis in Theorem \ref{thm:attn1}, we have that in each round, $u$ will be matched with probability at most $F'_u=\frac{1}{n}\sum_{e \sim \partial(u)} \frac{\alpha f_e p_e}{1-\epsilon}+\frac{\epsilon}{n} \le \frac{\alpha}{n}+\frac{2\epsilon}{n}$, if $u$ is safe.  Therefore $e$ will be probed with probability at least 
$$\Pr[e \mbox{ is probed}]\ge(1-\epsilon) \sum_{t=1}^{n} \frac{1}{n} \frac{\alpha f_{e}}{1+\epsilon} \left(1-\frac{\alpha+2\epsilon}{n}\right)^{t-1} \ge f_e \Big(1-(1-\alpha/n)^n-O(\epsilon)\Big) $$

\subsection{Simulation-Based Vertex-Attenuation in the Second Attenuation Framework}
\label{app:att2}


The vertex-attenuation approach is slightly more involved compared to edge-attenuation. Consider a node $u$ and let $\beta_t$ and $\beta'_t$ be the probability that $u$ is safe at time $t$ before and after attenuation, respectively. Define the event  $E_t$ as $\beta'_t \in \left[ (1-1/n)^{t-1}\frac{1}{1+\epsilon}, (1-1/n)^{t-1}\frac{1}{1-\epsilon} \right]$.
Here we show how to achieve the goal that $\left( \bigwedge_{t\in [n]} E_t \right)$ occurs with probability at least $1-\epsilon$.

For $t=1$, we do not need any attenuation. From Lemma \ref{lem:attn2}, we have that
$\beta_2 \ge (1-1/n)$. Let $\hat{\beta}_2$ be the estimation obtained from $N$ experiments. Thus we see
$$\Pr\left[|\hat{\beta}_2-\beta_2| \ge \epsilon \beta_2 \right] \le 2 \exp \left( -\frac{\epsilon^2}{3 e} N \right)$$

Thus by setting $N=\ln(2/\epsilon) (3e/\epsilon^2 )$, we claim that with probability at least $1-\epsilon$, $\hat{\beta}_2 \in [\beta_2 (1-\epsilon), \beta_2 (1+\epsilon)]$. The resulting attenuation factor is defined as follows: $\sigma_2=1$ if $\hat{\beta}_2 <(1-1/n)$ and $\sigma_2=\frac{1-1/n}{\hat{\beta}_2}$ if  $\hat{\beta}_2 \ge (1-1/n)$. At the beginning of $t=2$, we keep $u$ with probability $\sigma_2$ and throw away $u$ otherwise, and do this independently for other LHS nodes. 

 Now assume $\hat{\beta}_2 \in \left[ \beta_2 (1-\epsilon), \beta_2 (1+\epsilon) \right]$ occurs. Note that $\beta'_2=\beta_2*\sigma_2$. We have two cases.

\begin{itemize}
\item $\hat{\beta}_2 <(1-1/n)$. In this case, we have $\beta_2<(1-1/n)\frac{1}{1-\epsilon}$. And hence, $\beta'_2=\beta_2 \in \left[ (1-1/n), (1-1/n)\frac{1}{1-\epsilon} \right]$. 

\item $\hat{\beta}_2 \ge (1-1/n)$. In this case we have $\beta'_2=\beta_2 \frac{1-1/n}{\hat{\beta}_2} \in \left[ (1-1/n)\frac{1}{1+\epsilon}, (1-1/n)\frac{1}{1-\epsilon} \right]$.

\end{itemize}

Therefore at $t=2$ with probability $1-\epsilon$, the event $E_2$ occurs. Now assume $E_2$ occurs. From Lemma \ref{lem:attn2}, we see
$\beta_3 \ge (1-1/n) \beta_2' \ge (1-1/n)^2\frac{1}{1+\epsilon}$. Using the same analysis as above and the same value of $N$, we have that with probability $1-\epsilon$, the event $E_3$ occurs. Similarly the analysis carries through from $E_3$, $E_4$, $\ldots$,  $E_n$. Therefore we claim that $\Pr[ \bigwedge_{t\in [n]} E_t] \ge (1-\epsilon)^n \ge 1- n\epsilon$. Finally, in each round we scale down the error probability by a factor of $1/n$.
\\


\yhdr{Error Accumulation in the Second Attenuation Framework}
\\

From the analysis above, we can safely assume that in the second attenuation framework, 
 with probability at least $(1-\epsilon)$,
 $\Pr[\mathcal{S}_{u,t}] \in [(1-1/n)^{t-1}(1-\epsilon), (1-1/n)^{t-1}(1+\epsilon)] $ for all $u$ and $t$. We can achieve this by setting the simulation number in each round as $N=\ln(2nm/\epsilon) (3e /\epsilon^2 )$.  Condition on this, we show how the error accumulates in the final ratio. Consider a given edge $e=(u,v)$. Notice that
 $\E[R_{e,\ff_{t,v}}\given{\SC_{u,t}}] \le (1-1/n)^{t-1}(1+\epsilon)$, which implies that
 $$\E[\RR_{\BB}[R_{e,\ff_{t,v}}]] \ge \RR_{\BB}[(1-1/n)^{t-1}(1+\epsilon)] \ge 
  \RR_{\BB}\left[(1-1/n)^{t-1}\right]- \epsilon M$$
 where $M$ is a constant upper bound for absolute value of the first derivative of $\RR_{\BB}$ over $[0,1]$. Recall that $\A_{e,t}$ is the event that $e$ is {\it effectively} probed during round $t$. Applying the same analysis in Theorem \ref{thm:attn2}, we have
\begin{eqnarray*} 
\Pr[\mbox{e is probed}] &\ge&  (1-\epsilon)\sum_{t=1}^n \Pr[\A_{e,t}] \\
&\ge& 
 (1-\epsilon)^2 \sum_{t=1}^n \frac{f_e}{n}\left(1-\frac{1}{n}\right)^{t-1} \left( \mathsf{R}_{\BB} \left[ \left(1-\frac{1}{n}\right)^{t-1}\right] -\epsilon M \right)=\int_0^1 e^{-x}  \mathsf{R}_{\BB} [e^{-x}] dx
-O(\epsilon)
\end{eqnarray*} 
 
Therefore by setting $\epsilon$ small enough, we can get a ratio of  $\int_0^1 e^{-x}  \mathsf{R}_{\BB} [e^{-x}] dx
-\epsilon$ for any given $\epsilon>0$.

 \subsection{Simulation-Based Attenuation in the Third Attenuation Framework}
\label{app:att3}
 
For the third attenuation framework, we need the following key ingredient. Suppose we have a random variable $X$ with $\E[X]=\mu \in [\beta-\epsilon,1]$ where $0<\beta<1$. The random variable models the event that an edge $e$ is probed in some round or a LHS node is safe at some round $t$. Through analytical analysis, we know a good lower bound $\beta$ for the unknown mean value $\mu$ with error $\epsilon$. Now we need to compute a proper attenuation factor $\sigma \in [0,1]$ such that $\sigma \mu$ is very close to $\beta$ with high probability.  Consider the following simulation-based approach: we sample the random variable $X$ for $N$ times and let $\hat{\mu}$ be the sample mean; define $\sigma=\beta/\hat{\mu}$ if $\hat{\mu} \ge \beta$ and  $\sigma=1$ otherwise. Assume $\mu \ge \beta-\epsilon \ge \beta/2$. 

\begin{lemma} \label{lem:appn-attn3}
When $N= \frac{6}{\epsilon^2 \beta} \ln \frac{2}{\delta} $, we have that with probability at least $1-\delta$, $\sigma \mu \in [\beta-\epsilon, \frac{\beta}{1-\epsilon}]$
\end{lemma}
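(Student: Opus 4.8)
The plan is to treat this as a routine multiplicative-concentration statement. The random variable $X$ here is an indicator (``edge $e$ is probed in a given round'' or ``a node is safe at round $t$''), hence $X \in \{0,1\} \subseteq [0,1]$, and $\hat{\mu}$ is the empirical mean of $N$ i.i.d. copies. First I would establish a relative-error bound on $\hat{\mu}$ around $\mu$, and then read off the two-sided bound on $\sigma\mu$ by a short case analysis on whether $\hat{\mu} \geq \beta$.

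For the concentration step I would apply the multiplicative Chernoff bound to $S = N\hat{\mu}$, which has mean $N\mu$: for $\epsilon \in (0,1)$ we have $\Pr[\hat{\mu} \geq (1+\epsilon)\mu] \leq \exp(-\epsilon^2 N\mu/3)$ and $\Pr[\hat{\mu} \leq (1-\epsilon)\mu] \leq \exp(-\epsilon^2 N\mu/2)$, so together $\Pr[|\hat{\mu}-\mu| > \epsilon\mu] \leq 2\exp(-\epsilon^2 N\mu/3)$. The key quantitative move is to invoke the standing assumption $\mu \geq \beta/2$ to lower-bound the exponent by $\epsilon^2 N\beta/6$; substituting $N = \frac{6}{\epsilon^2\beta}\ln\frac{2}{\delta}$ then forces this tail to be at most $\delta$. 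Hence with probability at least $1-\delta$ the ``good event'' $G:\ \hat{\mu} \in [(1-\epsilon)\mu,\ (1+\epsilon)\mu]$ holds, and this is the only place the choice of $N$ is used.

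Finally, conditioning on $G$, I would verify $\sigma\mu \in \left[\beta-\epsilon,\ \frac{\beta}{1-\epsilon}\right]$ by splitting on the definition of $\sigma$. If $\hat{\mu} \geq \beta$ then $\sigma\mu = \beta\mu/\hat{\mu}$; the bound $\hat{\mu} \geq (1-\epsilon)\mu$ gives $\sigma\mu \leq \frac{\beta}{1-\epsilon}$, while $\hat{\mu} \leq (1+\epsilon)\mu$ gives $\sigma\mu \geq \frac{\beta}{1+\epsilon} \geq \beta(1-\epsilon) \geq \beta - \epsilon$ (using $\frac{1}{1+\epsilon} \geq 1-\epsilon$ and $\beta \leq 1$). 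If instead $\hat{\mu} < \beta$ then $\sigma = 1$ and $\sigma\mu = \mu$; the lower bound $\mu \geq \beta - \epsilon$ is exactly the hypothesis, while $(1-\epsilon)\mu \leq \hat{\mu} < \beta$ yields $\mu < \frac{\beta}{1-\epsilon}$. Both cases land in the target interval, which completes the argument.

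I expect no serious obstacle here: the only genuine idea is observing that $X$ is $[0,1]$-bounded and that $\mu \geq \beta/2$ lets the unknown $\mu$ in the Chernoff exponent be replaced by $\beta$, which is precisely what makes the required $N$ independent of $\mu$. The points needing a little care are combining the two Chernoff constants into the displayed factor $6$, and the elementary inequality $\frac{1}{1+\epsilon} \geq 1-\epsilon$ used to convert the multiplicative lower bound $\frac{\beta}{1+\epsilon}$ into the additive bound $\beta-\epsilon$.
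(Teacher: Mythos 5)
Your proof is correct and follows essentially the same route as the paper's: a multiplicative Chernoff bound on $\hat{\mu}$, the substitution $\mu \ge \beta/2$ to make the exponent $\epsilon^2 N \beta/6$ and hence the tail at most $\delta$ for the stated $N$, and the identical two-case analysis on $\hat{\mu} \ge \beta$ versus $\hat{\mu} < \beta$, including the same elementary step $\beta/(1+\epsilon) \ge \beta - \epsilon$. No gaps; your writeup is if anything slightly more explicit than the paper's about where the assumption $\mu \ge \beta - \epsilon \ge \beta/2$ and the boundedness of $X$ are used.
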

\begin{proof}
By applying Chernoff Bound, we see
$$\Pr[|\hat{\mu}-\mu| \ge \epsilon \mu ] \le 2 \exp \left( -\frac{\epsilon^2}{3} N \mu\right) \le 2 \exp \left( -\frac{\epsilon^2}{3} \frac{\beta N}{2} \right) = \delta$$

Thus with probability $1-\delta$, $\hat{\mu} \in [(1-\epsilon)\mu, (1+\epsilon)\mu]$. Assume this occurs. Consider the first case $\hat{\mu} \ge \beta$. Then $\sigma \mu=\beta \frac{\mu}{\hat{\mu}} \in [\beta/(1+\epsilon), \beta/(1-\epsilon)]$. We are done since $\beta/(1+\epsilon) \ge \beta-\epsilon$.  For the second case
$\hat{\mu} < \beta$, we see that $\mu \le \beta/(1-\epsilon)$. Thus
we have $\sigma\mu=\mu \in [\beta-\epsilon, \beta/(1-\epsilon)]$.

\end{proof}

WLOG assume all $f_e \ge \epsilon/n$ and $\RR_{\BB}$ have finite first derivative and upper bounded by $1/2$. Recall that in Equation \eqref{eqn:att3}, $\alpha_t \ge \alpha_1>0$ and $\gamma_t \in [1/e, 1]$ for all $t\in [n]$.

Consider the first round $t=1$.  Consider an edge $e$ and let $\beta'_{e,1}$ be the probability that $e$ is probed during the round $t=1$ before attenuation. Through the analysis in Section~\ref{sec:attenuation}, we see $\beta'_{e,1} \ge \alpha_1 f_e$. 
Let $\beta''_{e,1}$ be the probability that $e$ is probed during the round $t=1$ after attenuation. 
From Lemma \ref{lem:appn-attn3} and by setting $N=O(\ln (1/\delta)n/\epsilon^3)$, we have that with probability $1-\delta$, $\beta''_{e,1} \in [\alpha_1 f_e-\epsilon,\alpha_1 f_e/(1-\epsilon) ]$. Let $A_1$ be the event that during round $t=1$,  $\beta''_{e,1} \in[\alpha_1 f_e-\epsilon,\alpha_1 f_e/(1-\epsilon) ]$ for all $e$. By union bound and by setting $N=O(\ln (nm/\delta)n/\epsilon^3)$ (where $m=|U|$ and $|E| \le mn$), we can ensure $A_1$ occurs with probability $1-\delta$.

Now condition on $A_1$. Consider a node $u$ at the beginning of $t=2$. Let $\gamma'_{u,2}$ and $\gamma''_{u,2}$ be the probability that $u$ is safe at the beginning of $t=2$ before and after attenuation. We have $\gamma'_{u,2} \ge 1-\frac{\alpha_1 /(1-\epsilon)}{n} \ge \gamma_2-\epsilon_2$ where $\epsilon_2\doteq 2\epsilon/n$. Here WLOG assume $\epsilon<1/2$. Let $B_2$ be the event that $\gamma''_{u,2}\in [\gamma_2-\epsilon_2, \gamma_2/(1-\epsilon_2)]$ for all $u \in U$. Similarly we can ensure $B_2$ occurs with probability $1-\delta$ by setting $N=O(\ln (m/\delta)n^2/\epsilon^2)$.


Now condition on both $A_1$ and $B_2$. For a given safe edge $e$, let $\beta'_{e,2}$ be the probability that $e$ is probed during the round $t=2$ before attenuation. From previous analysis, we have 
$$\beta'_{e,2} \ge f_e \mathsf{R}_{\BB}[\gamma_2''] \ge f_e (\mathsf{R}_{\BB}[\gamma_2+2\epsilon_2] ) \ge f_e \alpha_2-\epsilon_2 $$

Let $\beta''_{e,2}$ be the probability that $e$ is probed during the round $t=2$ after attenuation and $A_2$ is the event that $\beta''_{e,2}\in [f_e \alpha_2-\epsilon_2, f_e \alpha_2/(1-\epsilon_2)]$ for all safe $e$ at $t=2$.
Applying Lemma \ref{lem:appn-attn3} and union bound, we can make sure that $A_2$ occurs with probability $1-\delta$, by setting $N=O(\ln (mn/\delta) n^3/\epsilon^3)$. 

Now condition on $A_1, B_2, A_2$. Similarly, let $\gamma'_{u,3}$ and $\gamma''_{u,3}$ be the probability that $u$ is safe at time $t=3$ before and after attenuation. Note that 
$$\gamma'_{3,u} \ge \gamma''_{u,2}\left(1-\frac{\alpha_2/(1-\epsilon_2)}{n} \right) \ge \left(\gamma_2-\epsilon_2\right)\left(1-\frac{\alpha_2}{n}-\frac{2\epsilon_2}{n}\right) \ge \gamma_3-\epsilon_2 \left(1+\frac{2}{n}\right) \doteq \gamma_3-\epsilon_3$$

Define $B_3$ as the event that $\gamma''_{u,3}\in [\gamma_3-\epsilon_3, \gamma_3/(1-\epsilon_3)]$ for all $u \in U$. Similarly we can ensure $B_3$ occurs with probability $1-\delta$ by setting $N=O(\ln (m/\delta)n^2/\epsilon^2)$.

 Let $\gamma''_{u,t}$ be the probability that $u$ is safe at time $t$ and $\beta''_{e,t}$ the probability that $e$ is probed during round $t$ when it is safe, after attenuation. Define $\epsilon_t=\frac{2\epsilon}{n}\left(1+\frac{2}{n}\right)^{t-2}$ for each $t \ge 2$ and $\epsilon_1=\epsilon$ for $t=1$. Similarly, 
let $B_t$ with $t>1$ be the event that $\gamma''_{u,t}\in [\gamma_t-\epsilon_t,\gamma_t/(1-\epsilon_t)]$ for all $u$ and $A_t$ with $t \ge 1$ be the event that $\beta''_{e,t}\in [f_e \alpha_t-\epsilon_t,f_e \alpha_t/(1-\epsilon_t)]$ for all safe $e$. Doing a similar analysis as above we have the following two observations.

{\it
\begin{itemize} 
\item Condition on $\{A_{t'}, B_{t'} | t'<t\}$. We can ensure that $B_t$ occurs with probability $1-\delta$ by setting $N=O(\ln(m/\delta)n^2/\epsilon^2)$.

\item Condition on $\{A_{t'}, B_{t'} | t'<t\}$ and $B_t$. We can ensure that $A_t$ occurs with probability $1-\delta$ by setting $N=O(\ln(mn/\delta)n^3/\epsilon^3)$.
\end{itemize}}

Therefore by setting $\delta=\epsilon/(2n)$, we achieve that with probability $1-\epsilon$, all events in $\{A_t, B_t| t\in [n]\}$ occur. Note that during each round and for each edge or LHS node, our sampling size is $N=O(\ln (mn^2/\epsilon) n^3/\epsilon^2) $.
\\

\yhdr{Error Accumulation in the Third Attenuation Framework}
\\

Now assume all events in $\{A_t, B_t| t\in [n]\}$ occur. Consider an edge $e=(u,v)$. The performance should be at least
$$(1-\epsilon)\sum_{t=1}^n \frac{\gamma''_{u,t} \beta''_{e,t}}{n}\ge (1-\epsilon) \sum_{t=1}^n \frac{(\gamma_t-\epsilon)(\alpha_t f_e-\epsilon)}{n}=\sum_{t=1}^n \frac{\gamma_t \alpha_t f_e}{n}-O(\epsilon)$$ 

This completes the description of the error analysis.

%

\bluee{
\section{Summary of Notation}
\label{appx:notations}
In this section, we summarize all of the notation used throughout the paper.

\begin{center}
\begin{tabular}{ |c|l| }
\hline
	Notation & Usage  \\
\hline
 $n$ & total number of online rounds \\ \hline
 $r_v$ & expected number of arrivals of vertex $v$ in the $n$ online rounds \\ \hline
 $\partial(u)$, $\partial(e)$ & the set of edges incident to vertex $u$ and edge $e$ respectively \\ \hline
 $\mathbf{f}$ & optimal solution to LP~\eqref{eqn:LP-off} \\ \hline
 $F_u$ & $\sum_{e \in \partial(u)} f_e p_e$ \\ \hline
 $\mathbf{g}$ & scaled fractional solution feasible to LP~\eqref{eqn:LP-off} \\ \hline
 $G_t(v)$ & The (random) star graph incident to $v$ with all of the \emph{safe} neighbors at time $t$ in a run of \\& the algorithm \\ \hline
 $\mathbf{\hat{G}}$ & (random) integral solution obtained by running GKPS on $\mathbf{g}$ \\ \hline
 $S_{u, t}$ & Event (random) that $u$ is safe at time $t$ in any run of the algorithm \\ \hline
 $\mathcal{A}_{e, t}$ & Event (random) that for an edge $e=(u, v)$, $v$ comes at time $t$, $u$ is safe and $e$ is probed \\ \hline
 $\lambda(e, \mathbf{g})$ & $\sum_{e' \neq e} g_{e'} p_{e'}$ \\ \hline
 
 \end{tabular}
\end{center}
}

\end{document}